\newcommand{\appmark}{$\star$}
\definecolor{darkblue}{rgb}{0.121,0.47,0.705}
\newcommand{\bl}{\color{darkblue}}
\let\emph\relax
\DeclareTextFontCommand{\emph}{\bl\em}
\DeclarePairedDelimiter\set{\{}{\}}
\DeclarePairedDelimiter\croc{\langle}{\rangle}
\def\Oh{\ensuremath{\mathcal{O}}}
\def\R{\ensuremath{\mathcal{R}}\xspace}
\def\cL{\ensuremath{\mathcal{L}}\xspace}
\def\cC{\ensuremath{\mathcal{C}}\xspace}
\def\cH{\ensuremath{\mathcal{H}}\xspace}
\def\cP{\ensuremath{\mathcal{P}}\xspace}
\def\vWest{\ensuremath{v_\mathrm{W}}\xspace}
\def\vSouth{\ensuremath{v_\mathrm{S}}\xspace}
\def\vEast{\ensuremath{v_\mathrm{E}}\xspace}
\def\vNorth{\ensuremath{v_\mathrm{N}}\xspace}
\def\GCi{\ensuremath{G_C^\mathrm{in}}\xspace}
\def\GCo{\ensuremath{G_C^\mathrm{out}}\xspace}
\DeclareMathOperator{\y}{y}
\newtheorem{observation}[theorem]{Observation}
\let\doendproof\endproof
\renewcommand\endproof{~\hfill$\qed$\doendproof}
\crefname{observation}{Obs.}{Obs.}
\Crefname{observation}{Observation}{Observations}
\crefname{proposition}{Prop.}{Props.}
\Crefname{proposition}{Proposition}{Propositions}
\crefname{section}{Sect.}{Sects.}
\Crefname{section}{Section}{Sections}
\newcommand{\etal}{{et~al.}\xspace}
\renewcommand{\orcidID}[1]{\href{https://orcid.org/#1}{\includegraphics[scale=.03]{}}}
\title{Morphing Rectangular Duals} 
\author{Steven~Chaplick\inst{1}\orcidID{0000-0003-3501-4608}
\and Philipp~Kindermann\inst{2}\orcidID{0000-0001-5764-7719}
\and Jonathan~Klawitter\inst{3,4}\orcidID{0000-0001-8917-5269}
\and Ignaz~Rutter\inst{5}\orcidID{0000-0002-3794-4406}
\and Alexander~Wolff\inst{3}\orcidID{0000-0001-5872-718X}}
\authorrunning{Chaplick et al.}
\institute{Maastricht University, The Netherlands
\and Universität Trier, Germany
\and Universität Würzburg, Germany
\and University of Auckland, New Zealand
\and Universität Passau, Germany
}
\begin{document}

\maketitle        

\pdfbookmark[1]{Abstract}{Abstract}
\begin{abstract}
  A rectangular dual of a plane graph $G$ is a contact
  representations of $G$ by interior-disjoint axis-aligned rectangles
  such that (i)~no four rectangles share a point and (ii)~the union of
  all rectangles is a rectangle.  A rectangular dual gives rise to a
  regular edge labeling (REL), which captures the orientations of the
  rectangle contacts.

  We study the problem of morphing between two
  rectangular duals of the same plane graph.  If we require that, at
  any time throughout the morph, there is a rectangular dual, then a
  morph exists only if the two rectangular duals realize the same REL.
  Therefore, we allow intermediate contact representations of
  non-rectangular polygons of constant complexity.  Given an
  $n$-vertex plane graph, we show how to compute in $\Oh(n^3)$ time a
  piecewise linear morph that consists of $\Oh(n^2)$ linear morphing
  steps.
  
  \keywords{morphing \and rectangular dual \and regular edge labeling
    \and lattice}
\end{abstract}

\section{Introduction} 
\label{sec:intro}
A \emph{morph} between two representations (e.g., drawings) of the same graph~$G$ 
is a continuous transformation from one representation to the other.
Preferably, a morph should preserve the user's ``mental map'', 
which means that, throughout the transformation, as little as necessary is changed to go from  
the source to target representation and that their properties are maintained~\cite{PHG07}.
For example, during a morph between two planar drawings,
each intermediate drawing should also be planar.
A \emph{linear morph} moves each point along a straight-line segment
at constant speed, where different points may have different speeds or
may remain stationary.  Note that a linear morph is fully defined by
the source and target representation.  A \emph{piecewise linear morph}
consists of a sequence of linear morphs, each of which is called a
\emph{step}.  

Morphs are well studied for planar drawings.
For example, it is known that piecewise linear planar morphs always exist 
between planar straight-line drawings~\cite{Cai44} 
and that, for an $n$-vertex planar graph, $\Oh(n)$ steps
suffice~\cite{AAB+17}, which is worst-case optimal.
Further research on morphs includes, among others, 
the study of morphs of convex drawings~\cite{ADFLPR15,KK85},
of orthogonal drawings~\cite{BLPS13,GSV19}, 
on different surfaces~\cite{KL08,CELP21}, and in higher
dimensions~\cite{ABP+19}.

Less attention has been given to morphs of alternative representations of graphs 
such as intersection and contact representations. 
A \emph{geometric intersection representation} of a graph~$G$ is a mapping~$\R$ 
that assigns to each vertex~$w$ of~$G$ a geometric object~$\R(w)$ 
such that two vertices~$u$ and~$v$ are adjacent in~$G$ 
if and only if~$\R(u)$ and~$\R(v)$ intersect.  
In a \emph{contact representation} we further require that,
for any two vertices~$u$ and~$v$, the objects~$\R(u)$ and~$\R(v)$ have disjoint interiors.
Classic examples are interval graphs~\cite{BL76}, 
where the objects are intervals of~$\mathbb{R}$, or coin graphs~\cite{Koe36}, 
where the objects are interior-disjoint disks in the plane.
Recently, Angelini \etal~\cite{ACCDR19} studied morphs of 
right-triangle contact representations of planar graphs.
They showed that one can test efficiently whether a morph exists
(in which case a quadratic number of steps suffice).
In this paper, we investigate morphs between contact representations
of rectangles.

\vspace*{-1.5ex}

\paragraph{Rectangular duals.}
A \emph{rectangular dual} of a graph~$G$ is a contact representation~$\R$ of~$G$ 
by axis-aligned rectangles such that (i)~no four rectangles share a point 
and (ii)~the union of all rectangles is a rectangle; see \cref{fig:dual}. 
Note that~$G$ may admit a rectangular dual only if it is planar and internally triangulated.
Furthermore, a rectangular dual can always be augmented with four additional
rectangles (one on each side) so that only these four rectangles touch the
outer face of the representation.
It is customary that the four corresponding vertices on the outer face of $G$ are denoted 
by $\vSouth$, $\vWest$, $\vNorth$, and $\vEast$, 
and to require that $\R(\vSouth)$ is bottommost, $\R(\vWest)$ is leftmost, 
$\R(\vNorth)$ is topmost, and $\R(\vEast)$ is rightmost; see \cref{fig:dual}.
The corresponding vertices are \emph{outer}; the remaining
ones are \emph{inner}.
Similarly, the four edges between the outer vertices are \emph{outer}; the
others are \emph{inner}.
A plane internally-triangulated graph has a representation 
with only four rectangles touching the outer face
if and only if its outer face is a 4-cycle and it has no
\emph{separating triangle},
that is, a triangle whose removal disconnects the
graph~\cite{KK85}. 
Such a graph is called a \emph{properly-triangulated planar (PTP) graph}. 
For such a graph, a rectangular dual can be computed in linear
time~\cite{KH97}.

\begin{figure}[tb]
  \centering
  \includegraphics{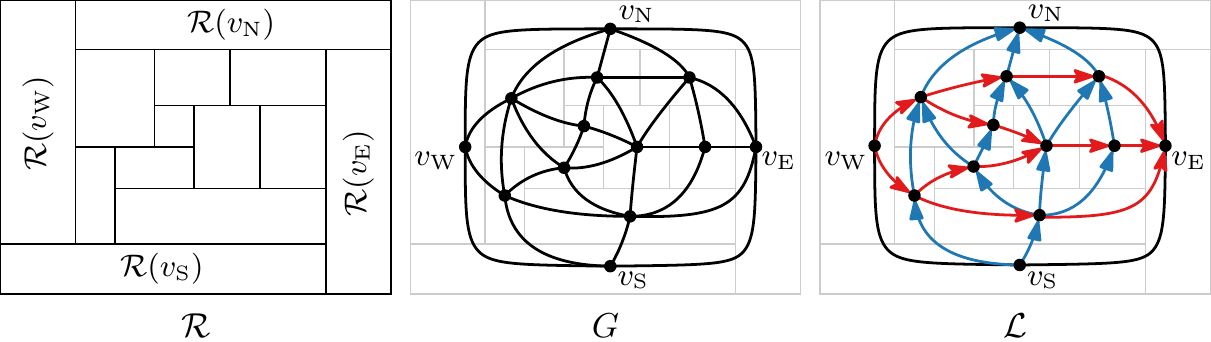}
  \caption{A rectangular dual $\R$ for the graph $G$; the REL $\cL$ induced by~$\R$.}
  \label{fig:dual}
\end{figure}

Historically, rectangular duals have been studied due to their applications in architecture~\cite{Ste73}, 
VLSI floor-planning~\cite{LL84,YS95}, and cartography~\cite{GS69}.
Morphs between rectangular duals are of interest, e.g., due to their relation to rectangular cartograms. 
Rectangular cartograms were introduced in 1934~\cite{Raisz34} and
combine statistical and geographical information in thematic maps, 
where geographic regions are represented as rectangles and scaled in proportion to some statistic.
There has been a lot of work on efficiently computing rectangular cartograms~\cite{HKPS04,vKS07,BSV12}, see also the recent survey~\cite{NK16}.
A morph between rectangular cartograms can visualize different data sets.
Florisson \etal~\cite{FKS05} implemented a method to
construct rectangular cartograms by first extending the given map
with ``sea tiles'' to obtain a rectangular dual,
and then using a heuristic that moves maximal line segments until
the area of the rectangles gets closer to the given data.
They also used their heuristic to morph between two rectangular cartograms,
but did not discuss when exactly this works and with what time complexity.

\vspace*{-1.5ex}

\paragraph{Regular edge labelings.}
A combinatorial view of a rectangular dual of a graph~$G$ 
can be described by a coloring and orientation of the edges of~$G$~\cite{KH97}. 
This is similar to how so-called Schnyder woods describe contact representations
of planar graphs by triangles~\cite{dFdMR94}.
More precisely, a rectangular dual $\R$ gives rise to a 2-coloring 
and an orientation of the inner edges of $G$ as follows.  
We color an edge~$\set{u, v}$ blue if the contact segment between $\R(u)$ 
and $\R(v)$ is a horizontal line segment, and we color it red otherwise.  
We orient a blue (red) edge $\set{u, v}$ as $uv$ if $\R(u)$ lies below (resp. left of) $\R(v)$; see \cref{fig:dual}. 
The resulting coloring and orientation has the following properties~(\cref{figs:REL:edgeOrder}):
\begin{enumerate}[label=(\arabic*)]
\item For each outer vertex $\vSouth$, $\vWest$, $\vNorth$, and
  $\vEast$, the incident inner edges are blue outgoing, red outgoing,
  blue incoming, and red incoming, respectively.
\item For each inner vertex, the incident edges form four clockwise
  (\emph{cw}) ordered non-empty blocks: blue incoming, red incoming,
  blue outgoing, 
  red outgoing.
\end{enumerate}

\begin{figure}[tb]
	\centering
	\includegraphics{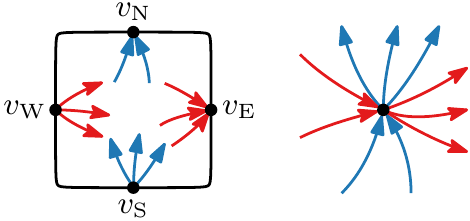}
	\caption{Edge order at the four outer vertices and at an inner
          vertex in a REL.}
	\label{figs:REL:edgeOrder}
\end{figure}

A coloring and orientation with these properties is called a
\emph{regular edge labeling (REL)} or \emph{transversal structure}.
We let $\cL = (L_1,L_2)$ denote a REL,
where $L_1$ is the set of blue edges and $L_2$ is the set of red edges.
Let~$L_1(G)$ and~$L_2(G)$ denote the two subgraphs of~$G$ induced by~$L_1$
and~$L_2$, respectively.
Note that both~$L_1(G)$ and $L_2(G)$ are st-graphs, that is,
directed acyclic graphs with exactly one source and exactly one sink. 
Kant and He~\cite{KH97} introduced RELs as intermediate objects
when constructing a rectangular dual of a PTP graph.
It is well known that every PTP graph admits a REL and thus a rectangular dual~\cite{He93,KH97}. 
A rectangular dual $\R$ \emph{realizes} a REL $\cL$ if the
REL induced by~$\R$ is $\cL$.

We define the \emph{interior} of a cycle to be the set of vertices and
edges enclosed by, but not on the cycle.    A 4-cycle is \emph{separating} if there are other
vertices both in its interior and in its exterior.
A separating 4-cycle is \emph{nontrivial} if its interior contains more than one vertex;
otherwise it is \emph{trivial}.  
We call non-separating 4-cycles also \emph{empty 4-cycles}.
(An empty 4-cycle contains exactly one edge.)

If the edges of a cycle~$C$ alternate between red
and blue, we say that~$C$ is \emph{alternating}.
We can move between different RELs of a PTP graph $G$ 
by swapping the colors and reorienting the edges inside an alternating 4-cycle, 
see \cref{fig:RELrotations}.
\begin{figure}[tb]
  \centering
    	\begin{subfigure}[t]{.47 \linewidth}
		\centering
		\includegraphics[page=1]{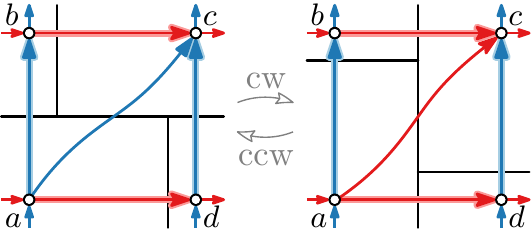}
		\caption{Rotating an empty 4-cycle.}
		\label{fig:RELrotations:empty}
	\end{subfigure}
	\hfill
	\begin{subfigure}[t]{.47 \linewidth}
		\centering
		\includegraphics[page=2]{RELrotations}
		\caption{Rotating a separating 4-cycle.}
		\label{fig:RELrotations:separating}
	\end{subfigure}
  \caption{Clockwise and counterclockwise rotations between RELs that recolor and reorient the edges
    inside an alternating 4-cycle $\croc{a, b, c, d}$ that is
    (a)~empty or (b)~separating.}
  \label{fig:RELrotations}
\end{figure}
This operation, which we call a \emph{rotation} and 
which we define formally in \cref{app:lattice}, 
connects all RELs of $G$.  
In fact, the RELs of~$G$ form a distributive lattice~\cite{Fus06,Fus09}.
A 4-cycle~$C$ of~$G$ is called \emph{rotatable}  
if it is alternating for at least one REL of~$G$.

\vspace*{-1.5ex}

\paragraph{Important related work.}
Other combinatorial structures of graph representations also form lattices; 
see the work by Felsner and colleagues~\cite{FZ07,Fel13}.
In the context of morphs,
Barrera-Cruz \etal~\cite{BHL19} exploited 
the lattice structure of Schnyder woods of a plane triangulation
to obtain piecewise linear morphs between planar straight-line drawings.
While their morphs require $\Oh(n^2)$ steps (compared to the optimum of $\Oh(n)$),
they have the advantage that they are ``visually pleasing''
and that they maintain a quadratic-size drawing area between any two
steps. To this end, Barrera-Cruz \etal showed that there is a path in
the lattice of length $\Oh(n^2)$ between any two Schnyder woods.
We show an analogous result for RELs.
In order to morph between right-triangle contact representations, 
Angelini \etal~\cite{ACCDR19} leveraged
the lattice structure of Schnyder woods.  They showed that if no
separating triangle has to be flipped (a flip is a step in the
lattice) between the source and the target Schnyder wood, then a morph
with $\Oh(n^2)$ steps exists (else, no morph exists that uses
right-triangle contact representations throughout).

\vspace*{-1.5ex}

\paragraph{Contribution.}
We consider piecewise linear morphs between two rectangular duals 
$\R$ and~$\R'$ of the same PTP graph~$G$.
If $\R$ and $\R'$ realize the same REL, then a single step suffices,
but if $\R$ and $\R'$ realize distinct RELs of $G$, 
then no rectangular-dual preserving morph exists (\cref{sec:sameREL}).
Therefore, we propose a new type of morph where intermediate drawings
are contact representations of~$G$ using convex polygons with up to
five corners (\cref{sec:singleRelaxed}). 
We show how to construct such a \emph{relaxed} morph
as a sequence of $\Oh(n^2)$ steps that implement moves in the lattice
of RELs of~$G$ (\cref{sec:serial}).
To this end, we make use of the following 
two results on paths in this lattice (which we prove in \cref{app:lattice}).

\begin{restatable}[{\hyperref[clm:RELdistance*]{\appmark}}]{proposition}{RELdistance}
  \label{clm:RELdistance}
  Given an $n$-vertex PTP graph $G$, the lattice of RELs of $G$ has
  diameter~$\Oh(n^2)$.
\end{restatable}

\begin{restatable}[{\hyperref[clm:RELshortestPath*]{\appmark}}]{proposition}{RELshortestPath}
  \label{clm:RELshortestPath}
  Let $G$ be an $n$-vertex PTP graph with RELs $\cL$ and $\cL'$.  In
  the lattice of RELs of $G$, a shortest $\cL$--$\cL'$ path can be
  computed in $\Oh(n^3)$ time.
\end{restatable}

We ensure that between any two morphing steps, our drawings remain on
a quadratic-size section of the
integer grid~-- like those of Barrera-Cruz \etal~\cite{BHL19}.
In order to evaluate the intermediate representations when our
drawings are not on the integer grid, we use the measure \emph{feature
  resolution}~\cite{HKKR14}, that is, the ratio of the length of the
longest segment over the shortest distance between two vertices or
between a vertex and a non-incident segment.  
We show that the feature resolution in {\em any} intermediate drawing is bounded by $\Oh(n)$.

Finally, we investigate executing rotations in parallel; see \cref{sec:parallel}.  
As a result,
we can morph between any pair $(\R,\R')$ of rectangular duals of the
given graph using $\Oh(1)$ times the minimum number of steps
needed to get from~\R to~$\R'$; however, our polygons have up to eight corners.

We prove every statement marked with a (clickable) ``\appmark'' in the appendix.

\section{Morphing between Rectangular Duals} 
\label{sec:rotation}

This section concerns morphs between two rectangular duals $\R$ and~$\R'$ of a PTP graph~$G$
that realize the same REL, adjacent RELs, and finally any two RELs.

\subsection{Morphing Between Rectangular Duals Realizing the Same REL} 
\label{sec:sameREL}

\begin{theorem} \label{clm:sameREL}
For a PTP graph $G$ with rectangular duals $\R$ and $\R'$,
(i) if $\R$ and $\R'$ realize the same REL, then there is a linear morph between them;
(ii) otherwise, there is no morph between them (not even a non-linear one).
\end{theorem}
\begin{proof}
Biedl \etal~\cite{BLPS13} studied morphs of orthogonal drawings. 
They showed that a single (planarity-preserving) linear morph suffices 
if all faces are rectangular and all edges are parallel in the two drawings,
that is, any edge is either vertical in both drawings or horizontal in both drawings.  
We can apply this result to two rectangular duals $\R$ and $\R'$ 
precisely when they realize them same REL. 
A linear morph between them changes the x-coordinates of vertical line segments
and the y-coordinates of horizontal line segments but does not change their relative order.

Now assume that $\R$ and $\R'$ realize different RELs $\cL$ and $\cL'$ of $G$, respectively.
Then w.l.o.g.\ some contact segment $s$
changes from being horizontal in~$\R$ to being vertical in $\R'$.
Since~$s$ must always be horizontal or vertical, it has to
collapse to a point and then extend to a segment again.
When $s$ collapses, the intermediate representation is not a
rectangular dual of $G$ since four rectangles meet at a single point.
Even worse, if a separating alternating cycle is rotated, then 
its interior contracts to a point, vanishes, and reappears
rotated by $90^\circ$.
\end{proof}

\subsection{Morphing Between Rectangular Duals with Adjacent RELs} 
\label{sec:singleRelaxed}
Let $\R$ and $\R'$ now realize different RELs~$\cL$ and $\cL'$ of $G$, respectively. 
By \cref{clm:sameREL}, any continuous transformation between $\R$ and~$\R'$ requires intermediate representations 
that are not rectangular duals of $G$, i.e., a morph in the traditional sense is not possible.
We relax the conditions on a morph such that, in an intermediate contact representation of $G$, 
vertices can be represented by convex polygons of constant complexity -- in this section, by 5-gons.
However, we still require that these polygons form
a tiling of the bounding rectangle of the representation.
We call a transformation with this property a \emph{relaxed morph}.
(When we talk about linear morphing steps, we omit the adjective ``relaxed''.)
The following statement describes relaxed morphs when $\cL$ and $\cL'$ are adjacent, that is,
$(\cL', \cL)$ is an edge in the lattice of RELs of $G$.

\begin{proposition} \label{clm:rotationalMorph}
Let $\R$ and $\R'$ be two rectangular duals of an $n$-vertex PTP graph $G$
realizing two adjacent RELs $\cL$ and $\cL'$ of $G$, respectively.
Then, we can compute in $\Oh(n)$ time a 3-step relaxed morph
between~$\R$ and~$\R'$.
If $\R$ and $\R'$ have an area of at most $n \times n$ and feature resolution
in $\Oh(n)$, then so has each representation throughout the morph.
\end{proposition}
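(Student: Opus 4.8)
The plan is to localize the whole construction to the single alternating 4-cycle whose rotation distinguishes $\cL$ from $\cL'$. First I would identify this cycle $C^*=\croc{a,b,c,d}$ together with the (at most $\Oh(1)$) rectangles it carries by comparing $\cL$ and $\cL'$, which takes $\Oh(n)$ time. Since $\cL$ and $\cL'$ agree everywhere outside $C^*$, the restrictions of $\R$ and $\R'$ to the exterior of the region $R^*$ spanned by $C^*$ realize the same REL; by \cref{clm:sameREL}(i) any coordinate discrepancy there is bridged by a linear morph, which I fold into the first and last of my three steps. It thus suffices to realise the rotation inside $R^*$ while the boundary of $R^*$ stays fixed, and everything outside moves by a single same-REL linear interpolation.

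For the empty case (\cref{fig:RELrotations:empty}) the chord is the only interior edge: in $\R$ the rectangles of $a$ and $c$ meet along a horizontal contact $s$ while those of $b$ and $d$ are separated, and in $\R'$ the roles are exchanged. A rectangles-only transition would force $s$ to collapse to a point (\cref{clm:sameREL}); instead I bevel the (at most four) corners flanking $s$, turning the incident rectangles into convex pentagons, and pivot the contact from horizontal through a \emph{short slanted segment of fixed length $\delta=\Theta(1)$} to vertical. The forbidden four-rectangle coincidence is thereby replaced by two nearby three-way junctions sliding along this short diagonal. The three linear morphing steps are then: (1)~from $\R$ to the pre-pivot pentagon configuration, (2)~the symmetric pivot, and (3)~from the post-pivot configuration to $\R'$; since the contact never shrinks below $\delta$, no four polygons ever meet at a point.

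For the separating case (\cref{fig:RELrotations:separating}) the interior block $H$ must reappear rotated by $90^\circ$ (cf.\ the proof of \cref{clm:sameREL}). The key observation is that I can realise this rotation as a \emph{single linear morph}: linearly interpolating each vertex of $H$ from its source position to its target (rotated) position applies to $H$ the matrix $M(t)=(1-t)\,I+t\,R_{90^\circ}$, and a direct computation gives $M(t)^{\!\top}M(t)=\bigl((1-t)^2+t^2\bigr)I$, so $M(t)$ is a similarity (a rotation composed with a uniform scaling by a factor in $[1/\sqrt2,1]$). Hence every intermediate is a valid tiling of $H$ by rotated rectangles, the complexity of $H$'s pieces stays at four corners, and the feature resolution inside $H$ is preserved for every $t$. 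I coordinate this rotation with the same corner-beveling of the surrounding pinwheel $a,b,c,d$ so that the rotating block always fits the (beveled) hole, again within three steps and using only convex $\le 5$-gons.

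It then remains to check the global guarantees. All changes are confined to $R^*$ with its boundary fixed, so the bounding rectangle stays $n\times n$; every modified piece is a convex $\le 5$-gon; the longest segment is $\Oh(n)$, and because the minimal feature is kept at $\Theta(1)$ (empty case) or preserved by the similarity $M(t)$ (separating case), the feature resolution stays $\Oh(n)$; and the construction is local, hence computable in $\Oh(n)$ time. The main obstacle I expect is establishing the validity of the three linear morphs throughout their entire duration: I must choose the two intermediate configurations and the bevel sizes so that the vertexwise interpolation never yields a non-convex or self-intersecting polygon, a gap, or an overlap, and so that in the separating case the frame bevel and the rotating block remain collision-free while all features stay bounded below by a constant. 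This is where most of the careful (but routine) case analysis of the geometry will go.
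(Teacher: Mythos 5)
Your proposal shares the paper's core geometric insight (the linear interpolation of a $90^\circ$ rotation is a similarity, so the interior block stays internally valid), but it is built on a claim that is false in general: that the rotation can be confined to the region $R^*$ spanned by $C^*$ ``while the boundary of $R^*$ stays fixed,'' with everything outside handled by same-REL linear morphs folded into steps~1 and~3. The rotation flips the chirality of the pinwheel $a,b,c,d$, and this necessarily changes how these four rectangles share the boundary of $R^*$ with their neighbors \emph{outside} the cycle: e.g., the $a$--$b$ contact (a cycle edge, so it must stay horizontal) moves up by the height of $I_C$, so $b$'s share of the left boundary of $R^*$ shrinks while $a$'s grows, and $a$'s share of the bottom boundary shrinks while $d$'s grows. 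If an outside rectangle touches $a$ only along the part of the boundary that gets handed over to $d$, the pivot destroys that adjacency and creates a new one --- no choice of bevels fixes this. Worse, your rigid rotation swaps the footprint of a non-square interior from $w\times h$ to $h\times w$; one can check directly that if $a$ must keep its bottom-side contacts (fixed outside), then $a$ is forced to be a rectangle whose right side lies to the right of the rotated interior, so $d$ (which must lie right of the vertical $a$--$d$ contact) can no longer touch the interior at all. Hence the post-pivot configuration you need for step~3 need not exist, and you also cannot invoke \cref{clm:sameREL}(i) for step~3 unless that configuration is an honest rectangular dual realizing $\cL'$.

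This is exactly the difficulty the paper spends most of its proof on, and it is global, not ``routine local case analysis.'' The paper first performs a \emph{preparatory} linear morph (\cref{clm:tidyRecDual}) to a rectangular dual $\R_1$ satisfying (P1) --- the interior $I_C$ is a \emph{square}, so the rotation keeps the same footprint and no swap must be absorbed --- and (P2)/(P2') --- the strips swept by the moving border segments of $I_C$ contain no other horizontal (resp.\ vertical) segments, which guarantees that the contacts sliding along the boundary only change length, never existence. Establishing these conditions is nontrivial: it requires building an auxiliary graph $\hat G$ by splitting the vertices of the leftmost path $P_d$, adding buffer rectangles to square up $I_C$, and re-running the Kant--He drawing algorithm, which is also what yields the $n\times n$ area and $\Oh(n)$ feature-resolution bounds that you assert but do not derive. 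In short, your main morph idea matches the paper's, but the missing preparatory machinery --- the analogue of (P1), (P2), and \cref{clm:tidyRecDual} --- is not an implementation detail; without it the middle step of your three-step morph fails.
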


We assume w.l.o.g.\ that $\cL'$ can be obtained from
$\cL$ by a cw rotation of an alternating 4-cycle $C$.  
The idea is to rotate the interior of the $4$-cycle 
while simultaneously moving the contact segments that form the edges of~$C$;
see \cref{fig:rotationalMorph,fig:rotationalMorphEmpty}.
To ensure that, except for the vertices of~$C$ all regions remain
rectangles and that moving the contact segments of~$C$ does not change
any adjacencies, the representation needs to satisfy certain requirements.  
Therefore, our relaxed morph from $\R$ to $\R'$ consists of three steps.  
First a \emph{preparatory morph} from~$\R$ to a rectangular dual~$\R_1$ with REL~$\cL$ 
for which~$C$ satisfies conditions stated below;
second a \emph{main morph} which transforms~$\R_1$
to a rectangular dual~$\R_2$ whose REL is~$\cL'$, and third a
\emph{clean-up morph} that transforms~$\R_2$ into $\R'$.

\begin{figure}[tb]
  \centering
  \includegraphics{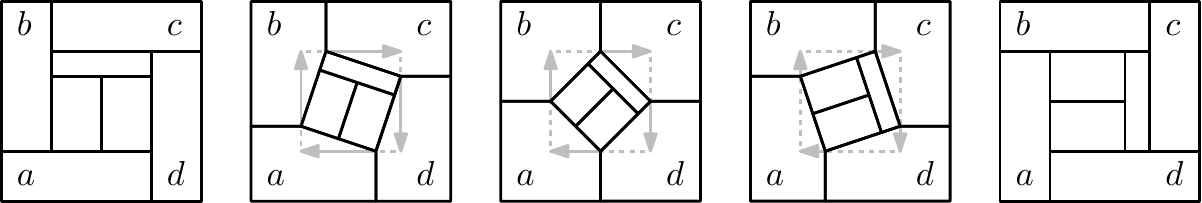}
  \caption{A linear morph rotating the separating 4-cycle
    $\croc{a, b, c, d}$ emulates the rotation in the
    corresponding REL; see \cref{fig:RELrotations}. The interior of
    the 4-cycle turns by 90$^{\circ}$ without changing its shape,
    while the outer contact segments move horizontally and vertically.}
  \label{fig:rotationalMorph}
\end{figure}

\begin{figure}[tb]
  \centering
  \includegraphics{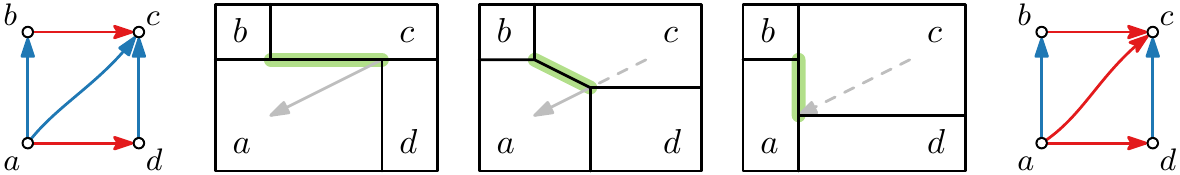}
  \caption{A linear morph that rotates the inner contact segment of an empty 4-cycle.}
  \label{fig:rotationalMorphEmpty}
\end{figure}

We first describe the main morph $\croc{\R_1, \R_2}$ in detail, as
this allows us to also infer the conditions under which it can be
executed successfully. Then we describe the preparatory~morph
$\croc{\R, \R_1}$, whose sole purpose is to ensure the conditions that
are required for the main morph.

\vspace*{-1.5ex}

\pdfbookmark[3]{Main Morph}{Main Morph}
\paragraph{Main morph $\croc{\R_1, \R_2}$ to rotate $C$.}  
Let $a$, $b$, $c$, and $d$ be the vertices of $C$ in cw order where $a$
is the vertex with an outgoing red and outgoing blue edge in $C$,
i.e., it corresponds to the bottom-left rectangle of~$C$.  

Assume for now that~$C$ is separating.
We have the following requirements for~$R_1$, which become apparent shortly.
\begin{quote}
(P1) The rectangle~$I_C$ bounding the interior of~$C$ is a square.
\end{quote}
Next, we consider the four maximal segments of $\R$ that contain one of the
four borders of $I_C$, which we call \emph{border segments}.
Let $s$ be the upper border segment of $I_C$
and suppose its right endpoint lies on the left side of a rectangle $\R(x)$.
Let $S$ be the part in the horizontal strip defined by~$I_C$ 
that starts at $I_C$ and ends at~$\R(x)$.
\begin{quote}
(P2) The only horizontal segments that intersect $S$ are border segments of $I_C$; see \cref{fig:prepMorph:result}.
\end{quote}
We define (P2) for the other three border segments of $I_C$ analogously. 
Next, assume that~$C$ is empty. 
Then the rectangle~$I_C$ degenerates to a segment~$s$, 
and we assume w.l.o.g.\ that~$s$ is horizontal. 
Now $I_C$ still has two vertical border segments, 
but the two horizontal border segments share the segment~$s$.
Let $s$ have again its right endpoint on the left side of $\R(x)$.
Let $S$ be the rectangular area of height~1 directly below $s$
that starts at $b$ and ends at $\R(x)$. We have the following requirement if $C$ is empty.
\begin{quote}
(P2') The only horizontal segment that intersects $S$ is $s$; see \cref{fig:prepMorph:empty:result}.
\end{quote}
There is no requirement for the left side of $s$ and the left vertical border segment of $I_C$.
The requirements for the right vertical border segment is (P2).

We now describe the main morph for the case that~$C$ is a separating 4-cycle.
In this case, the interior of~$C$ forms a square in~$\R_1$ by (P1).  
Recall that the rotation of $C$ from $\cL$ to $\cL'$ turns the interior of $C$ by~$90^{\circ}$.
During the morph, we move each corner of~$I_C$ to the coordinates of
the corner that follows in cw order around $I_C$ in $\R_1$; see \cref{fig:rotationalMorph}.
All other points in $I_C$ are expressed as a convex combination of the corners of~$I_C$ 
and then move according to the movement of the corners.  
Furthermore, we move all points on the left
border segment of $I_C$ that are outside the boundary of~$I_C$
horizontally to the x-coordinate of the right side of $I_C$. We move
the points on other border segments of $I_C$ analogously.

This describes a single linear morph that results in a rectangular dual~$R_2$
that realizes the REL $\cL'$. Since $I_C$ starts out as a square by (P1), 
throughout the morph, $I_C$ remains a square,
and by similarity all rectangles inside $I_C$ remain rectangles. 
The rectangles $a$, $b$, $c$, and $d$ become convex 5-gons.
Furthermore, since outside $I_C$ the horizontal border segments of $I_C$ 
move an area that contains no other horizontal segments by (P2), no contact along a vertical segment arises or vanishes. 
Analogously, for the vertical border segments, no contact along a horizontal segment arises or vanishes. 
Hence, we maintain the same adjacencies.

Note that, if $I_C$ would not be a square, then its corners would move at 
different speeds and $I_C$ would deform to a rhombus where the inner angles 
are not $90^\circ$, and so would all the rectangles inside $I_C$.

Next, consider the case that $C$ is an empty 4-cycle.
Recall that in this case, the rectangle~$I_C$ degenerates to a
segment~$s$ and we assume that $s$ is horizontal.
Note that~$\R'$ has a vertical contact between $a$ and $c$,
since we assume a cw rotation from $\cL$ to $\cL'$.
We then we move the right endpoint of $s$ vertically down by 1 and horizontally to the 
x-coordinate of the left endpoint of $s$;
see~\cref{fig:rotationalMorphEmpty}. 
We also move all points on the border segments that contain the
right endpoint of $s$ accordingly.
The rectangles $a$ and $c$ become convex 5-gons.
Furthermore, since outside $I_C$ only the horizontal border segments of $I_C$ 
lie inside the area of height~1 below $s$ by (P2'), 
no contact along a vertical segment arises or vanishes. 
Analogously, due to condition (P2) for the vertical border segments, no contact along a 
horizontal segment arises or vanishes. Hence,  maintaining the same adjacencies.

To show that the feature resolution remains in $\Oh(n)$, note that both $\R_1$ and $\R_2$ are drawn 
on a $n\times n$ grid. 
Furthermore, the rectangles inside $I_C$ are scaled
during the morph, but since $I_C$ is a square, the whole area inside $I_C$
is scaled by at most $\sqrt{2}$. The distances outside $I_C$ cannot become
smaller than $1$.

\begin{lemma}\label{clm:mainmorph}
	Let $\R_1$ and $\R_2$ be two rectangular duals of an $n$-vertex PTP graph~$G$
	realizing two adjacent RELs $\cL$ and $\cL'$ of $G$, respectively,
	such that $\R_1$ satisfies (P1) and (P2) (or (P1) and (P2')).
	Then, we can compute in $\Oh(n)$ time a relaxed morph between~$\R_1$ and~$\R_2$.
	If~$\R_1$ and~$\R_2$ have an area of at most $n \times n$ and feature 
	resolution in $\Oh(n)$, then so has each representation throughout the~morph.
\end{lemma}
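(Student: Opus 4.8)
The plan is to produce the relaxed morph as a \emph{single} linear step. Since a linear morph is fully determined by the source and target representations, it suffices to fix the target position of every polygon corner and then verify that every representation obtained by straight-line interpolation (at each time $t\in[0,1]$) is a valid contact representation of $G$ with the claimed geometric properties. I would treat the separating case in full and then obtain the empty case as its degeneration.

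For the separating case I would place $I_C$ with corners $p_0,p_1,p_2,p_3$ in cw order and let each corner $p_i$ travel along the segment to $p_{(i+1)\bmod 4}$, carry every interior point along as the same convex combination of the four moving corners, and translate each border segment so that the part of the left (resp.\ right, top, bottom) border segment lying outside $I_C$ is shifted onto the supporting line of the opposite side of $I_C$. The first thing to check is that the interior stays a square at \emph{every} time: writing the square as $(0,0),(0,w),(w,w),(w,0)$, the interpolated corners are $(0,tw),(tw,w),(w,(1-t)w),((1-t)w,0)$, and a one-line computation shows that all four edge vectors have equal length $w\sqrt{t^2+(1-t)^2}$ while consecutive edges are orthogonal. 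Hence at each instant $I_C$ is a uniformly scaled and rotated copy of itself; by similarity every rectangle nested in $I_C$ remains a rectangle and all interior adjacencies are preserved for free. This is precisely where (P1) is needed: without the square assumption the corners would move at different speeds and $I_C$ would deform into a non-rectangular rhombus.

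The two statements that the similarity does not give for free, and which I expect to be the main obstacle, are (a) that $a,b,c,d$ stay convex $5$-gons tiling the strip around $I_C$, and (b) that no external adjacency is created or destroyed \emph{during} the motion, not merely at $t=0,1$. For (a), each of $a,b,c,d$ starts as a rectangle sharing a side with $I_C$; as $I_C$ rotates and the incident border segments translate, the shared side tilts, replacing one right angle by a short slanted edge and turning the rectangle into a convex pentagon, with convexity preserved because the single moving edge only tilts monotonically and never swings past an adjacent edge. For (b), this is exactly the role of (P2): each border segment sweeps, over the whole strip it traverses, a region containing no other horizontal (resp.\ vertical) segment, so along the monotone linear motion the moving segment meets no segment it was not already in contact with and loses contact with none. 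Because every point moves on a straight line at constant speed, intermediate validity reduces to this static emptiness check rather than a continuous-time case analysis, and the tiling stays gap-free since opposite border segments move onto each other's supporting lines; hence every intermediate representation is a valid contact representation and the target realizes $\cL'$. The empty case is the degeneration in which $I_C$ collapses to the horizontal segment $s$: the right endpoint of $s$ moves down by $1$ and left to the $x$-coordinate of its left endpoint, its two incident border segments are carried along, (P2$'$) plays the role of (P2) for the height-$1$ strip below $s$ while (P2) still guards the vertical border segment, and the same argument makes $a$ and $c$ convex $5$-gons with all adjacencies preserved.

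For the running time, each corner's target is computed in $\Oh(1)$ and the total number of polygon corners is $\Oh(n)$, so the morph is produced in $\Oh(n)$ time. For feature resolution, both $\R_1$ and $\R_2$ lie on the $n\times n$ grid, so the bound is $\Oh(n)$ at the endpoints; during the morph the interior of $I_C$ is scaled by the factor $\sqrt{t^2+(1-t)^2}\in[1/\sqrt2,1]$ computed above, so interior distances never drop below $1/\sqrt2$, exterior distances never drop below $1$, and the longest segment stays $\Oh(n)$, giving feature resolution $\Oh(n)$ at every instant.
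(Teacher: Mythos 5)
Your proposal is correct and matches the paper's own proof essentially step for step: a single linear morph moving each corner of $I_C$ to its cw successor, interior points as convex combinations, border segments translated onto the opposite side, with (P1) giving the similarity argument, (P2)/(P2$'$) ruling out adjacency changes, the empty case as a degeneration, and the same $\sqrt{2}$ scaling bound for feature resolution. Your explicit computation that the interpolated quadrilateral stays a square (edge length $w\sqrt{t^2+(1-t)^2}$, orthogonal consecutive edges) is a nice elaboration of a fact the paper only asserts, but it is the same argument.
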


\vspace*{-1.5ex}

\pdfbookmark[3]{Preparatory Morph}{Preparatory Morph}
\paragraph{Preparatory morph $\croc{\R, \R_1}$.}
We consider again the case where $C$ is separating first.
To obtain $\R_1$ from $\R$,
we extend $G$ to an auxiliary graph $\hat G$ 
that is almost a PTP graph but that contains empty chordless 4-cycles
(which are represented by four rectangles touching in a single point).
For $\hat G$, we compute an auxiliary REL $\hat \cL$ 
where the empty chordless 4-cycles of $\hat G$ are colored alternatingly.
We then use the second step of the linear-time algorithm by Kant and He~\cite{KH97} to compute
an (almost) rectangular dual $\hat \R_1$ of $\hat G$ that realizes $\hat \cL$.
By reversing the changes applied to $G$ to obtain $\hat G$, we derive $\R_1$ from $\hat \R_1$.
We explain in \cref{app:kanthe} the algorithm by Kant and He and why it also works for $\hat G$. 

We start with the changes to ensure (P2) for the upper border segment~$s$ of~$I_C$; 
it works analogously for the other border segments.
Let~$s$ end to the right again at~$\R(x)$.
Let~$P_d$ be the leftmost path in~$L_2(G)$ from~$d$ to~$x$.
Let $\y_1(R)$ and $\y_2(R)$ denote the lower and upper y-coordinate of a rectangle $R$, respectively.
Note that (P2) holds if for each vertex~$v$ on~$P_d$ we have~$\y_1(\R_1(v)) < \y_2(\R_1(a))$. 
Therefore, from~$G$ to~$\hat G$,
we duplicate~$P_d$ by splitting each vertex~$v$ on~$P_d \setminus \set{x}$ into two vertices~$v_1$ and~$v_2$; see \cref{fig:prepMorph}.
We then connect~$v_1$ and~$v_2$ with a blue edge.
Let~$y$ be the successor of~$d$ on~$P_d$.
We assign the edges cw between (and including)~$dy$ and~$ad$ to~$d_1$, 
and the edges cw between~$ad$ and~$dy$ to~$d_2$.
If~$x = y$, the edge~$dy$ is assigned to both~$d_1$ and~$d_2$;
otherwise we replace~$dy$ with~$d_1 y_1$ and~$d_2 y_2$.
For all other vertices~$v$ on~$P_d \setminus \set{d, x}$, 
let~$u$ be the predecessor and let~$w$ be the successor of~$v$ on~$P_d$.
We assign the edges cw between~$vw$ and~$uv$ to~$v_1$, and the edges
cw between~$uv$ and~$vw$ to~$v_2$; furthermore, we add the edges
$u_1v_1$, $v_1w_1$, $u_2v_2$, and~$v_2w_2$.
As a result, there is a path from~$a$ to~$x$ in~$\hat \R_1$ 
through the ``upper'' copies of the vertices in~$P_d \setminus \set{x}$, 
and the bottom side of their corresponding rectangles are aligned.
Hence, $\y_1(\hat\R_1 (v_1)) < \y_1(\hat\R_1(v_2)) = \y_2(\hat\R_1(a))$ for every~$v \in P_d \setminus \set{x}$,
and $\y_1(\hat\R_1(x)) < \y_2(\hat\R_1(a))$.
We obtain for each~$v$ on~$P_d \setminus \set{x}$
the rectangle~$\R_1(v)$ by merging~$\hat \R_1(v_1)$ and~$\hat \R_1(v_2)$.
This works analogously if~$C$ is empty; see~\cref{fig:prepMorph:empty}.

\begin{figure}[tb]
  \centering
  	\begin{subfigure}[t]{.3\linewidth}
		\centering
		\includegraphics[page=1]{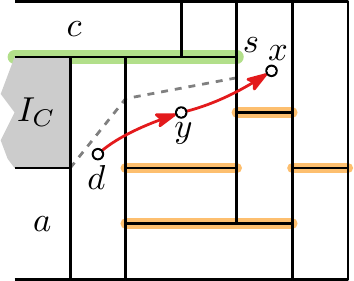}
		\caption{The path $P_d$ in $\R$.}
		\label{fig:prepMorph:path}
	\end{subfigure}
	\hfill
	\begin{subfigure}[t]{.31\linewidth}
		\centering
		\includegraphics[page=2]{prepMorph}
		\caption{Splitting $P_d\setminus\set{x}$,
                    we~obtain~$\hat \R_1$ and~$\hat \cL$.}
		\label{fig:prepMorph:newRecDual}
	\end{subfigure}
	\hfill
	\begin{subfigure}[t]{.31\linewidth}
		\centering
		\includegraphics[page=3]{prepMorph}
		\caption{Deriving $\R_1$ from $\hat \R_1$,
                    we get property~(P2).}
		\label{fig:prepMorph:result}
	\end{subfigure}
  \caption{We compute $\R_1$ via an auxiliary rectangular dual $\hat \R_1$ and an auxiliary REL $\hat \cL$.}
  \label{fig:prepMorph}
\end{figure} 
 
\begin{figure}[tb]
  \centering
  	\begin{subfigure}[t]{.3\linewidth}
		\centering
		\includegraphics[page=1]{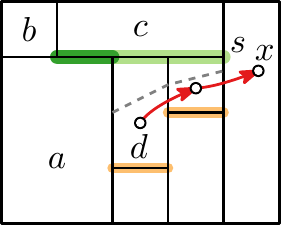}
		\caption{The path $P_d$ in $\R$.}
		\label{fig:prepMorph:empty:path}
	\end{subfigure}
	\hfill
	\begin{subfigure}[t]{.31\linewidth}
		\centering
		\includegraphics[page=2]{prepMorphEmpty}
		\caption{Splitting $P_d \setminus\set{x}$,
                    we~obtain $\hat \R_1$ and $\hat \cL$.}
		\label{fig:prepMorph:empty:newRecDual}
	\end{subfigure}
	\hfill
	\begin{subfigure}[t]{.3\linewidth}
		\centering
		\includegraphics[page=3]{prepMorphEmpty}
		\caption{Deriving $\R_1$ from $\hat \R_1$, we get property (P2').} 
		\label{fig:prepMorph:empty:result}
	\end{subfigure}
  \caption{Preparatory morph analogous \cref{fig:prepMorph} for the case when $C$ is empty.}
  \label{fig:prepMorph:empty}
\end{figure}  

Next, we describe how to ensure (P1) in $\R_1$,
i.e., that the interior $I_C$ of $C$ is a square.
Let $w_C$ and $h_C$ be the minimum width and height, respectively,  
of a rectangular dual of $I_C$. 
These values can be computed in $\Oh(I_C)$ time.
Note that because of (P2), the algorithm by Kant and He~\cite{KH97}
will draw $I_C$ with minimum width and height in $\R_1$:
the algorithm draws every horizontal line segment as low as possible,
and because of (P2) there is no horizontal line segment to the right of $I_C$
that forces the upper boundary segment of $I_C$ to be higher;
a symmetric argument applies to the left boundary segment of $I_C$.
Hence, if $w_C = h_C$, then no further changes to $\hat G$ are required. 
Otherwise, if w.l.o.g.\ $w_C < h_C$, 
we add $h_C - w_C$ many buffer rectangles between $I_C$ and $d$ as follows; see \cref{fig:prepMorph:square}.
Let $\Delta = h_C - w_C$.
From $G$ to $\hat G$, we add vertices $v_1, \ldots, v_\Delta$ with a red path through them
and, for $i \in \set{1, \ldots, \Delta}$, we add the blue edges $av_i$ and $v_ic$.
All incoming red edges of $d$ in $G$ from the interior of $C$
become incoming red edges of $v_1$, and
we add a red edge $(v_\Delta, d_2)$.
In $\hat G$, the minimum width and height of $I_C$ are now the same
and $I_C$ is drawn as a square in $\hat \R_1$ .
To obtain $\R_1$ from $\hat \R_1$, we remove the buffer rectangles
and stretch all right-most rectangles of $I_C$ to $\y_1(\R_1(d))$.  

\begin{figure}[tb]
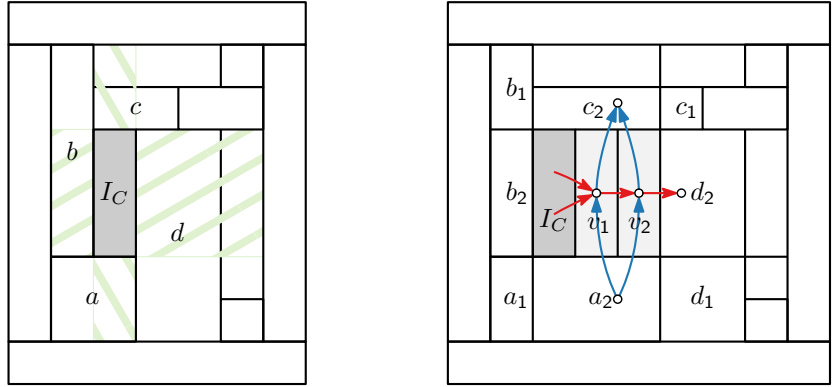

  \centering
  	\begin{subfigure}[t]{.4 \linewidth}
		\centering
		\includegraphics[page=4]{prepMorph}
		\caption{If $I_C$ is not a square in $\R_1$ with (P2), \ldots}
		\label{fig:prepMorph:square:notyet}
	\end{subfigure}
	\hfill
	\begin{subfigure}[t]{.55 \linewidth}
		\centering
		\includegraphics[page=5]{prepMorph}
		\caption{\ldots then we extend $\hat \R_1$ and $\hat \cL$ with dummy vertices $v_i$
		inside $\croc{a, b, c, d}$.}
		\label{fig:prepMorph:square:extend}
	\end{subfigure}
  \caption{To ensure (P1), i.e., that $I_C$ is a square in $\R_1$, we extend $\hat \R_1$ and $\hat \cL$ further.}
  \label{fig:prepMorph:square}
\end{figure}

Concerning the running time, note that we can both find and split the paths for (P2) 
and add the extra vertices for (P1) in $\Oh(n)$ time.
Since $\hat G$ and $\hat \cL$ have a size in $\Oh(n)$,
the algorithm by Kant and He~\cite{KH97} also runs in $\Oh(n)$~time.

Finally, we show that the area of~$\R_1$ is bounded by $n \times n$.
Observe that each triangle in $G$ corresponds to a T-junction in $\R$
and thus to an endpoint of a maximal line segment.
There are $2n-4$ triangles in $G$ and 
thus $n-2$ inner maximal line segments besides the four outer ones.
The algorithm by Kant and He~\cite{KH97} ensures 
that each x- and y-coordinate inside a rectangular dual contains a horizontal or
vertical line segment, respectively.
Note that $\hat \R_1$ contains exactly $\Delta$ more maximal line segments than $\R$.
These were added inside~$C$ if
in~$I_C$ the number of horizontal and vertical maximal line segments
differed by at least~$\Delta$.
Hence, $\R_1$ contains at most $n-2$ vertical and at most $n-2$
horizontal inner maximal line segments.
Thus, the area of~$\R_1$ is bounded by $n \times n$.
Lastly, note that $\hat \R_1$ and $\R_1$ have the same size.
Furthermore, we move points only away from each other, so the feature
resolution remains in $\Oh(n)$.

\begin{lemma} \label{clm:tidyRecDual}
Let $\R$ be a rectangular dual of an $n$-vertex PTP graph $G$
realizing a REL $\cL$ of~$G$.
Let $C$ be an alternating separating 4-cycle in $\cL$.
Then, we can compute in $\Oh(n)$ time a rectangular dual $\R_1$ of $G$ realizing $\cL$ that satisfies the requirements (P1) and (P2).
If $\R$ has an area of at most $n \times n$ and feature resolution $\Oh(n)$,
then so has $\R_1$ and each representation throughout the morph.
\end{lemma}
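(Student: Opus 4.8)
The plan is to realize the two shape constraints (P1) and (P2) not by pushing segments around in $\R$ by hand, but by encoding them into an auxiliary instance that the Kant--He algorithm~\cite{KH97} is forced to draw in the desired way. Concretely, I would build from $G$ an auxiliary graph $\hat G$ together with an auxiliary REL $\hat\cL$ that still colors the copy of $C$ alternatingly, run the second phase of the Kant--He algorithm to obtain an (almost) rectangular dual $\hat\R_1$ realizing $\hat\cL$, and then undo the combinatorial changes to recover a rectangular dual $\R_1$ of $G$ realizing the original REL $\cL$. Every modification only refines $\R$ locally, so once we check that merging the split vertices and deleting the buffers induces exactly $\cL$ on $G$, the derived $\R_1$ automatically realizes $\cL$; the real work is to verify that the forced drawing satisfies (P1) and (P2).

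For (P2) on the upper border segment $s$ of $I_C$, the obstruction is a horizontal segment protruding into the strip $S$ between $I_C$ and $\R(x)$. I would eliminate all such segments simultaneously by pinning the rectangles along the leftmost red path $P_d$ from $d$ to $x$ to a common lower boundary. To do this I split every internal vertex $v$ of $P_d$ into a ``lower'' copy $v_1$ and an ``upper'' copy $v_2$ joined by a blue edge, distributing the incident edges to the two copies according to their cw position relative to the two path edges at $v$, and extending the path consistently via $v_1w_1$ and $v_2w_2$. The intended effect is that in $\hat\R_1$ the bottoms of the upper copies all align with $\y_2(\hat\R_1(a))$, giving $\y_1(\hat\R_1(v)) < \y_2(\hat\R_1(a))$ for every $v$ on $P_d$; merging each $\hat\R_1(v_1)$ with $\hat\R_1(v_2)$ then yields (P2). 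Applying the same duplication to the leftmost paths for the other three border segments handles the remaining cases symmetrically.

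For (P1) I would first compute the minimum width $w_C$ and height $h_C$ of a rectangular dual of the interior of $C$ in $\Oh(|I_C|)$ time. The key observation is that, once (P2) holds, the Kant--He rule of drawing every horizontal segment as low as possible forces $I_C$ to be drawn with exactly these minimum dimensions, since no segment outside $I_C$ pushes its upper or left boundary outward. If $w_C = h_C$ we are done; otherwise, say $w_C < h_C$, I insert $\Delta = h_C - w_C$ buffer rectangles between $I_C$ and $d$, realized in $\hat G$ by a red path of $\Delta$ new vertices, each joined to $a$ and $c$ by blue edges and spliced into the red edges entering $d$ from the interior. This widens $I_C$ by exactly $\Delta$ in $\hat\R_1$, turning it into a square; removing the buffers and stretching the rightmost rectangles of $I_C$ up to $\y_1(\R_1(d))$ recovers (P1) in $\R_1$.

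It remains to bound the resources. All path searches, vertex splits, and buffer insertions are linear, and since $|\hat G|, |\hat\cL| \in \Oh(n)$ the Kant--He algorithm also runs in $\Oh(n)$ time. For the area I would use the correspondence between the $2n-4$ triangles of $G$ and the T-junctions of $\R$ to count $n-2$ inner maximal segments in each direction; the $\Delta$ segments added inside $C$ are absorbed because they are inserted only when $I_C$ already had a deficit of at least $\Delta$ between its horizontal and vertical segment counts, so $\R_1$ still fits on an $n\times n$ grid. Since every step only separates previously coincident features, the feature resolution stays in $\Oh(n)$. I expect the main obstacle to be the verification that the path-duplication genuinely forces the alignment claimed for (P2) under the Kant--He coordinate assignment: this is exactly the point where one must lean on the precise monotonicity of that algorithm (the draw-as-low-as-possible rule), which is the most algorithm-specific and least self-evident part of the argument.
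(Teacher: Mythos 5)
Your proposal is correct and follows essentially the same route as the paper's proof: the same auxiliary graph $\hat G$ with REL $\hat\cL$ fed to the Kant--He algorithm, the same duplication of the leftmost path $P_d$ (with edges distributed by cw position and copies merged afterwards) to force (P2), the same buffer rectangles between $I_C$ and $d$ to force (P1), and the same counting of maximal segments via T-junctions for the $n \times n$ area bound and separation argument for the $\Oh(n)$ feature resolution. The one point you flag as the ``main obstacle''---that the duplication forces the bottoms of the upper copies to align with $\y_2(\hat\R_1(a))$ under the draw-as-low-as-possible rule---is exactly the alignment claim the paper asserts, so there is no substantive difference.
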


To prove \cref{clm:tidyRecDual}, we do not use zig-zag moves, which
were introduced for morphing orthogonal drawings~\cite{BLPS13,GSV19},
since then we would not be able to bound the area by $n \times n$
throughout the morph. In order to 
keep a bound~of $\Oh(n) \times \Oh(n)$, it seems that we would need a
re-compactification step after each zig-zag move. Therefore, we keep
the modifications in our morph as local as possible.

Let us now consider the morph $\croc{\R_1, \R_2}$ again. 
Since~$\R_1$ now satisfies (P1) and (P2),
only the inside $I_C$  of $C$, the four rectangles of $C$, and 
the border segments of $I_C$ move.
The target positions of these can be computed in $\Oh(n)$ time. 
The linear morph is then defined fully by the start and target positions.
Furthermore, $\R_2$ and all intermediate representations have the same area as~$\R_1$. 

\begin{proof}[of {\cref{clm:rotationalMorph}}]
  By \cref{clm:sameREL,clm:mainmorph}, we can get from~$\R$ via $\R_1$
  and $\R_2$ to $\R'$ using three steps. 
  The claims on the running time and the area follow from \cref{clm:sameREL},
  \cref{clm:mainmorph,clm:tidyRecDual}, and the observations above.
\end{proof}

\subsection{Morphing Between Rectangular Duals} 
\label{sec:serial}
Combining results from the previous sections, we can now
prove our main result.

\begin{theorem} \label{clm:serialMorphing}
Let $G$ be an $n$-vertex PTP graph with rectangular duals~$\R$ and $\R'$.
We can find in $\Oh(n^3)$~time a relaxed morph between $\R$ and $\R'$ with $\Oh(n^2)$ steps
that executes the minimum number of rotations.
If $\R$ and $\R'$ have an area of at most $n \times n$
and feature resolution in $\Oh(n)$, 
then so does~each representation throughout the morph.
\end{theorem}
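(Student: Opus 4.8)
The plan is to assemble the theorem from the building blocks established above: a shortest path in the REL lattice together with the three-step morph for adjacent RELs. First I would let $\cL$ and $\cL'$ be the RELs realized by $\R$ and $\R'$, respectively, and invoke \cref{clm:RELshortestPath} to compute in $\Oh(n^3)$ time a shortest $\cL$--$\cL'$ path $\cL = \cL_0, \cL_1, \dots, \cL_k = \cL'$ in the lattice of RELs of $G$. By \cref{clm:RELdistance} its length satisfies $k \in \Oh(n^2)$, and each consecutive pair $(\cL_{i-1}, \cL_i)$ is an edge of the lattice, i.e., the two RELs are adjacent and differ by a single rotation.

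Next I would turn this combinatorial path into a sequence of rectangular duals and morph along it. For each intermediate index $0 < i < k$ I would fix a concrete rectangular dual $\R_i$ realizing $\cL_i$, computed in $\Oh(n)$ time with the algorithm of Kant and He~\cite{KH97}; I set $\R_0 = \R$ and $\R_k = \R'$. Because $\cL_{i-1}$ and $\cL_i$ are adjacent, \cref{clm:rotationalMorph} yields a three-step relaxed morph between $\R_{i-1}$ and $\R_i$ in $\Oh(n)$ time, and since these morphs share the fixed endpoints $\R_i$ they concatenate into one relaxed morph from $\R$ to $\R'$. The total number of steps is $3k \in \Oh(n^2)$, and the running time is $\Oh(n^3)$ for the shortest path, plus $\Oh(kn) = \Oh(n^3)$ for computing the intermediate duals, plus $\Oh(kn) = \Oh(n^3)$ for the individual morphs, hence $\Oh(n^3)$ overall.

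It then remains to control area, feature resolution, and optimality. The Kant--He algorithm draws each $\R_i$ on a grid of size at most $n \times n$; on the integer grid the shortest distance between two distinct features is at least $1$ and the longest segment is at most $n$, so every $\R_i$ has feature resolution $\Oh(n)$, while $\R_0 = \R$ and $\R_k = \R'$ satisfy these bounds by assumption. Thus the source and target of each three-step morph meet the hypotheses of \cref{clm:rotationalMorph}, and so every intermediate representation stays within an $n \times n$ area and keeps feature resolution in $\Oh(n)$. For optimality, each lattice edge corresponds to exactly one rotation, so any sequence of rotations transforming $\cL$ into $\cL'$ is a walk from $\cL$ to $\cL'$ in the lattice and has length at least the lattice distance; since our path realizes this distance, the morph executes the minimum number of rotations.

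The main obstacle is not in any single step but in making the chaining consistent: I must ensure that the rectangular dual reached at the end of the morph for $(\cL_{i-1}, \cL_i)$ is exactly the source used by the next morph, and that every such dual simultaneously realizes the prescribed REL and respects the $n \times n$ area and $\Oh(n)$ feature-resolution bounds throughout. Choosing the fixed Kant--He duals $\R_i$ as both the targets of one morph and the sources of the next resolves this cleanly, as these are precisely the endpoints that \cref{clm:rotationalMorph} accepts and whose bounds it preserves; the only care needed is to verify the integer-grid bounds for the Kant--He duals, which is routine.
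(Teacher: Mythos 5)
Your proposal is correct and follows essentially the same route as the paper: compute a shortest $\cL$--$\cL'$ path via \cref{clm:RELshortestPath} (length $\Oh(n^2)$ by \cref{clm:RELdistance}), then realize each rotation by the constant-step relaxed morph of \cref{clm:rotationalMorph}, inheriting the area and feature-resolution bounds from that proposition. Your explicit choice of fixed Kant--He duals as intermediate endpoints (and the check that they live on an $n\times n$ integer grid) merely fills in a chaining detail that the paper's terse proof leaves implicit, and your optimality argument via lattice distance matches the intended reading of ``minimum number of rotations.''
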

\begin{proof}
Let $\cL$ and $\cL'$ be the RELs realized by $\R$ and $\R'$, respectively.
By \cref{clm:RELshortestPath} a shortest path between $\cL$ and $\cL'$
in the lattice of RELs of $G$ can be computed in $\Oh(n^3)$ time, and its length is~$\Oh(n^2)$ by \cref{clm:RELdistance}.
For each rotation along this path, we construct a relaxed morph 
with a constant number of steps in~$\Oh(n)$ time by \cref{clm:rotationalMorph}.
The area and feature resolution also follow from \cref{clm:rotationalMorph}.
\end{proof}

\section{Morphing with Parallel Rotations} 
\label{sec:parallel}
We now show how to reduce the number of morphing steps
by executing rotations in parallel. 
We assume that all separating 4-cycles in our PTP graph $G$
are trivial.

Consider two cw rotatable separating 4-cycles $C$ and $C'$
that share a maximal horizontal line segment $s$ as border segment; see \cref{fig:conflictingRotations}.
If~$C$ contains the left endpoint of $s$,
a rotation of $C$ would move $s$ downwards while a rotation of $C'$ would move $s$ upwards.
Therefore, such a morph skews angles such that they are not multiples of $90^\circ$ 
even at vertices that are not incident to the interior of $C$ or $C'$.
To avoid such morphs, we say that~$C$ and $C'$ are~\emph{conflicting}.
For a set of cw rotatable separating 4-cycles $\cC$ for $\R$,
this gives rise to a \emph{conflict graph} $K(\cC)$ with vertex set $\cC$.
Note that a separating 4-cycle can be in conflict with at most four other separating 4-cycles.
Therefore, $K(\cC)$ has maximum degree four.

\begin{figure}[tb]
  \centering
  \includegraphics{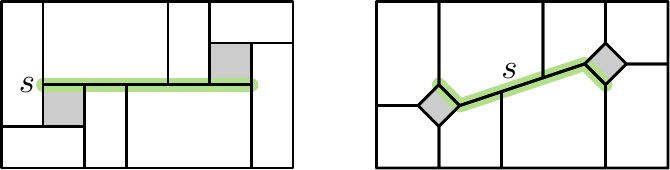}
  \caption{Two conflicting separating 4-cycles that share the interior segment $s$.}
  \label{fig:conflictingRotations}
\end{figure}

Next, consider a separating 4-cycle $C$ that shares a maximal horizontal line segment $s$ 
with an empty 4-cycle $C'$; see \cref{fig:fourRotations8Gon}.
In this case, we can rotate and translate the inner contact segment of $C'$ downwards, 
which allows us to simultaneously rotate~$C$ and $C'$ 
without creating unnecessary skewed angles. 
Also note that two cw rotatable empty 4-cycles may only overlap with one edge
but may not contain an edge of the other. Hence, they are not conflicting.   

\begin{figure}[tb]
  \centering
  \includegraphics{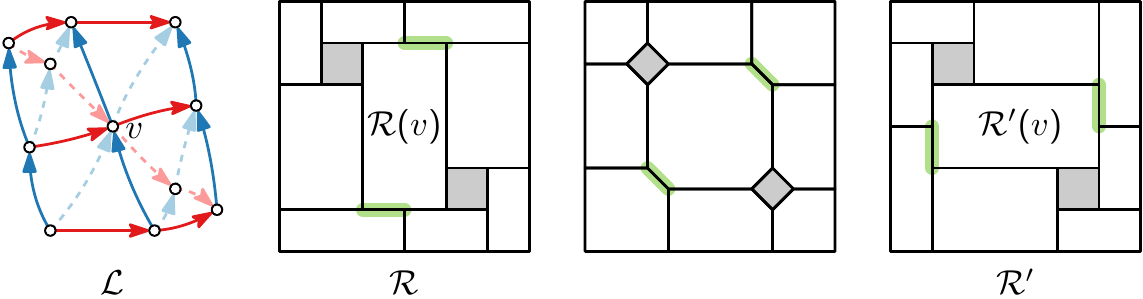}
  \caption{When we rotate four cw alternating 4-cycles that share a vertex $v$ in a single relaxed morph from $\R$ to $\R'$,
  then $v$ is temporarily represented by a convex 8-gon.}
  \label{fig:fourRotations8Gon}
\end{figure}

To rotate a set $\cC$ of alternating 4-cycles using $\Oh(1)$ steps, we
divide $\cC$ into color classes based on $K(\cC)$ and rotate one color
class at a time. 

\begin{restatable}[{\hyperref[clm:simultaneousRotation*]{\appmark}}]{proposition}{simultaneousRotation}
  \label{clm:simultaneousRotation}
  Let $\R$ be a rectangular dual of a PTP graph $G$ with REL $\cL$
  whose separating 4-cycles are all trivial.  Let $\cC$ be a set of
  alternating 4-cycles of $\R$.  Let~$\cL'$ be the REL obtained from
  $\cL$ by executing all rotations in~$\cC$.  There exists a relaxed
  morph with $\Oh(1)$ steps from~$\R$ to a rectangular dual~$\R'$
  realizing~$\cL'$.  The morph can be computed in linear time.
\end{restatable}

Note that there exist rectangular duals with a linear number of
alternating 4-cycles~-- extend \cref{fig:fourRotations8Gon} into a
grid structure.
Hence, parallelization can reduce the number of morphing steps
by a linear factor.
Even more, using \cref{clm:simultaneousRotation}, we obtain the
following approximation result.

\begin{restatable}[{\hyperref[clm:approx*]{\appmark}}]{theorem}{approx}\label{clm:approx}
Let $G$ be a PTP graph whose separating 4-cycles are all trivial.
Let $\R$ and $\R'$ be two rectangular duals of $G$,
and let $\mathrm{OPT}$ be the minimum number of steps
in any relaxed morph between $\R$ and $\R'$.
Then we can construct in cubic time
a relaxed morph consisting of $\Oh(\mathrm{OPT})$ steps.
\end{restatable}

\section{Concluding Remarks} 
In the parallelization step, we considered only PTP graphs whose
separating 4-cycles are trivial.  It remains open how to parallelize
rotations for RELs of PTP graphs with nontrivial separating 4-cycles,
in particular, to construct morphs that execute rotations of nested
4-cycles in parallel.  It would also be interesting to guarantee area
bounds for morphs with parallel rotations. 

During our relaxed morphs, we allow rectangles to temporarily turn
into convex 5-gons (with four edges axis-aligned).  Alternatively, one
could insist that the intermediate objects remain ortho-polygons.
This would require upt to six vertices per shape and would force not only the
outer rectangles in \cref{fig:rotationalMorph} to change their shape,
but also the rectangles in the interior.  We find our approach more
natural.

\pdfbookmark[1]{References}{References}

\bibliographystyle{abbrvurl} 
\bibliography{abbrv,sources}

\clearpage
\pdfbookmark[0]{Appendix}{toc.appendix}
\section*{Appendix}
\appendix
\label{app:appendix}

\section{The Lattice of RELs} 
\label{app:lattice}

In this section, we review, for a given PTP graph, the structure of
its RELs. In particular, we detail how rotations allow us to switch
from one REL to another. We also remind the reader of the lattice of
RELs induced by such rotations.
Moreover, we prove that the lattice has a height of $\Oh(n^2)$ (\cref{clm:RELdistance})
and that a shortest path can be computed in $\Oh(n^3)$ time (\cref{clm:RELshortestPath}).

Let~$G$ be a PTP graph, let~$\cL$ be a REL of~$G$, and let~$C$ be a
4-cycle in~$G$ and~$\cL$. 
Recall that~$C$ is called \emph{alternating}
if the edges of~$C$ alternate between red and blue in~$\cL$.  
By swapping the colors of the edges inside~$C$ and by then fixing the orientations
of the recolored edges (uniquely), we obtain a different REL~$\cL'$
of~$G$; see \cref{fig:RELrotations}.  Considering rectangular duals
that realize~$\cL$ and~$\cL'$, respectively, note that the operation
``rotates'' the interior of the cycle (when~$C$ is
separating)~-- or the interior contact segment~-- by~90$^\circ$.
Hence, we call such an operation either a \emph{clockwise (cw)} or a
\emph{counterclockwise (ccw) rotation}.  A 4-cycle~$C$ of~$G$ is
called \emph{rotatable} if it is alternating for at least one REL
of~$G$.

\paragraph{The lattice.}
Recall that a \emph{lattice} is a poset in which each pair $(a, b)$
has a unique smallest upper bound~-- the \emph{join} $a \vee b$ of $a$
and $b$~-- and a unique largest lower bound~-- the \emph{meet}
$a \wedge b$ of $a$ and $b$.  A lattice is \emph{distributive} if the
join and meet operation are distributive with respect to each other.
Fusy~\cite{Fus06,Fus09} showed that the set of RELs of $G$ forms a
distributive lattice, where $\cL \le \cL'$ if and only if
$\cL'$ can be obtained from $\cL$ by ccw rotations.  
Let $\cH(G)$ denote the lattice of RELs of $G$.
The minimum element $\cL_{\min}$ of $\cH(G)$ is the unique REL that admits no cw
rotation; the maximum element $\cL_{\max}$ is the unique REL that
admits no ccw~rotation.

\paragraph{Shortest paths.}
A path in the REL lattice $\cH(G)$ is \emph{monotone} if it uses only
cw rotations or only ccw rotations.  By Birkhoff's
theorem~\cite{Bir37}, the length of two monotone paths between two
elements in a distributive lattice is equal.  Furthermore, any three
elements $a, b, c$ of a distributive lattice have a unique median
$(a \vee b) \wedge (a \vee c) \wedge (b \vee c) = (a \wedge b) \vee (a
\wedge c) \vee (b \wedge c)$ that lies on a shortest path between any
two of them~\cite{BK47}.  
Applying this to $a$, $b$, and $a \vee b$, yields that $a \vee b$ lies
on a shortest path from~$a$ to~$b$.  Hence, for any pair of RELs,
there exists a shortest path that contains their meet and one that
contains their join.

\paragraph{Upper bound on path lengths.}
Recall that a 4-cycle is separating if there are other vertices both
in its interior and its exterior.  A separating 4-cycle is
\emph{nontrivial} if its interior contains more than one vertex;
otherwise it is \emph{trivial}.  Note that the four vertices of a
separating 4-cycle $C$ can be seen as the outer cycle of a PTP
subgraph and thus, by the coloring rules of RELs, each vertex of $C$
has edges in only one color to vertices in the interior of~$C$.  We
call non-separating 4-cycles also \emph{empty 4-cycles}.

Consider an $n$-vertex PTP graph $G$ where all separating 4-cycles are
trivial.  Then any path in $\cH(G)$ has length at most
$\Oh(n^2)$~\cite{EMSV12}.
To extend this result~to general PTP graphs, we use a recursive
decomposition of $G$ following Eppstein \etal~\cite{EMSV12} (as well
as Dasgupta and Sur-Kolay~\cite{DSK01} and Mumford~\cite{Mum08}).
For a nontrivial separating 4-cycle $C$ of a PTP graph $G$, we define
two \emph{separation components} of $G$ with respect to $C$: The
\emph{inner separation component} $\GCi$ is the maximal subgraph of
$G$ with $C$ as outer face.  The \emph{outer separation component}
$\GCo$ is the minor of $G$ obtained by replacing the interior of $C$
with a single vertex~$v_C$.  Note that both $\GCi$ and $\GCo$ are PTP
graphs.  A \emph{minimal separation component} of $G$ is a separation
component of $G$ that cannot be split any further.  Partitioning~$G$
into minimal separation components takes linear time~\cite{EMSV12}.

Eppstein \etal~\cite{EMSV12} pointed out that there can be a quadratic
number of nontrivial separating 4-cycles in $G$, but that these can be
represented in linear space by finding all maximal complete bipartite
subgraphs $K_{2,i}$ of $G$.  Such a representation can be found in
linear time~\cite{EMSV12}.
Note that, among the 4-cycles of a subgraph $K_{2,i}$ (with $u$ and
$v$ forming the small partition), only the outermost 4-cycle $C$ is
rotatable.  This can be seen as $C$ is the outer 4-cycle of a smaller
PTP graph and thus any of the smaller cycles has a monochromatic path
from~$u$ to~$v$; see~\cref{fig:cycleRelationsApp:Kmn,fig:cycleRelationsApp:intersection}.

\begin{observation}
  \label[observation]{clm:cycleRelations}
  Let $G$ be a PTP graph with two rotatable 4-cycles~$C$ and~$C'$.
  Then the interiors of $C$ and $C'$ are either disjoint or one lies
  inside the other.
\end{observation}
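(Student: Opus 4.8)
The plan is to prove Observation~\ref{clm:cycleRelations} by contradiction: I would suppose that the interiors of two rotatable 4-cycles $C$ and $C'$ \emph{properly cross}, meaning neither interior contains the other and yet they are not disjoint, and then show that this forces one of $C$, $C'$ to fail the rotatability criterion. The central tool is the characterization recalled just before the statement: a 4-cycle is rotatable iff it is alternating in some REL, and whenever a 4-cycle $C$ is separating, each of its four vertices sends edges of only a \emph{single} color into the interior of $C$ (the coloring rules applied to the PTP subgraph bounded by $C$). So the obstruction to rotatability of a separating 4-cycle is the existence of a monochromatic path between two opposite (or the ``small-partition'') vertices running through its interior, exactly as noted for the $K_{2,i}$ case.

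First I would set up the topology. Both $C$ and $C'$ are 4-cycles in the plane graph $G$, so by the Jordan curve theorem each bounds a closed disk. If the two disks neither nest nor are interior-disjoint, then the boundary cycles must cross, and since $G$ is planar and these are cycles in $G$, the crossing must happen at shared vertices. I would argue that a proper crossing forces $C$ and $C'$ to share exactly two vertices that are \emph{nonadjacent} on both cycles (opposite corners), with each cycle having one private vertex inside the other's disk and one private vertex outside. This is the only combinatorial pattern in which two 4-cycles can cross in a plane graph without one nesting inside the other; the case analysis here is small because both cycles have length exactly four.

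The heart of the argument is then to derive a contradiction from this crossing pattern using the color rules. Let $p$ and $q$ be the two shared vertices. Because $C$ separates the plane, the portion of $C'$ lying inside $C$ is a path from $p$ to $q$ routed through the interior of $C$; by the single-color rule at $p$ and $q$ with respect to $C$, the edges of this path incident to $p$ and to $q$ are monochromatic, and in fact one can propagate this to show $C$ carries a monochromatic $p$--$q$ path through its interior. A monochromatic path between two vertices of $C$ through its interior is precisely what prevents $C$ from ever being alternating (it forces, in every REL, a color pattern on $C$ incompatible with the alternating requirement), contradicting the assumption that $C$ is rotatable. By symmetry the same reasoning applies to $C'$, so in either orientation of the crossing we reach a contradiction.

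I expect the main obstacle to be the topological bookkeeping in the second step: rigorously ruling out the degenerate crossing configurations (cycles sharing an edge, sharing three vertices, or sharing two adjacent vertices) and confirming that a genuine proper crossing really does leave a monochromatic path trapped in the interior. The color-rule contradiction in the third step is clean once the configuration is pinned down, so the real care lies in the planarity argument that converts ``interiors neither disjoint nor nested'' into ``a cycle crosses the separating cycle and strands a monochromatic interior path.'' A clean way to streamline this is to invoke that the vertices of any separating 4-cycle form the outer face of a PTP subgraph (as stated in the excerpt) and argue that a crossing cycle would have to enter and exit this subgraph through its outer boundary, which the single-color structure forbids.
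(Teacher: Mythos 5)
Your topological setup (a proper crossing forces $C$ and $C'$ to share exactly two vertices $p,q$, non-adjacent on both cycles, each cycle having one private vertex inside and one outside the other) matches the paper's, and your instinct to apply the single-color rule to the portion of $C'$ trapped inside $C$ is exactly the mechanism the paper uses (there it is packaged as the preceding observation that only the outermost 4-cycle of a $K_{2,i}$ can be rotatable, with $p,q$ forming the small partition of such a $K_{2,i}$). But your final step rests on a false principle. You claim that a monochromatic $p$--$q$ path through the interior of $C$ is ``precisely what prevents $C$ from ever being alternating.'' That is not true: \emph{every} alternating separating 4-cycle admits such paths. If $C$ is alternating, its interior carries a REL of the PTP subgraph bounded by $C$, in which the four corners of $C$ play the roles of \vNorth, \vEast, \vSouth, \vWest; this interior REL always contains a blue path from the south corner to the north corner and a red path from the west corner to the east corner. (In the rectangular-dual picture these are the contacts crossing the interior rectangle from bottom to top and from left to right.) If your principle were correct, no separating 4-cycle could ever be rotated, and rotation of separating 4-cycles is the paper's central operation.

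The obstruction acts on the \emph{other} cycle. The path $p$--$a'$--$q$, where $a'$ is the private vertex of $C'$ inside $C$, consists of two edges of $C'$ that are adjacent at $a'$; once they are shown to have the same color in every REL, $C'$ is not alternating in any REL and hence not rotatable, which is the desired contradiction (no appeal to symmetry is needed). To make that monochromaticity airtight you also need the step you wave at with ``one can propagate'': the single-color rule by itself gives one well-defined color at $p$ and one at $q$, and you must argue these coincide. This holds because $p$ and $q$ are \emph{opposite} corners of $C$, so in the interior REL of $C$ they take the roles of \vNorth and \vSouth (or \vWest and \vEast), and each such pair has interior edges of a common color. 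With these two corrections -- attributing the violation to $C'$ rather than $C$, and justifying why the two path edges share a color -- your argument becomes essentially the paper's proof with its $K_{2,i}$ lemma inlined.
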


\begin{proof}
  We prove that it is not possible that the interiors of $C$ and $C'$
  intersect properly.  Assume otherwise and observe that then $C$ and
  $C'$ must intersect in two non-adjacent vertices~$u$ and $v$ such
  that each contains one vertex of the other; see
  \cref{fig:cycleRelationsApp:intersection}.  However, then $u$ and $v$ form one
  partition of a subgraph $K_{2,i}$.  Since neither $C$ nor $C'$ is
  the outermost 4-cycle of this $K_{2,i}$ neither can be~rotatable.
\end{proof}

\begin{figure}[tb]
  \centering
  	\centering
	    \begin{subfigure}[t]{.47 \linewidth}
		\centering
		\includegraphics[page=1]{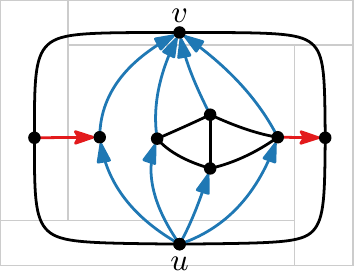}
		\caption{A PTP graph with a $K_{2,5}$ subgraph.}
		\label{fig:cycleRelationsApp:Kmn}
	\end{subfigure}
	\hfill
	\begin{subfigure}[t]{.47 \linewidth}
		\centering
		\includegraphics[page=2]{figs/cycleRelations}
		\caption{Two properly intersecting 4-cycles.}
		\label{fig:cycleRelationsApp:intersection}
	\end{subfigure}
  \caption{(a) Only the outermost 4-cycle of this $K_{2,5}$
    subgraph of a PTP graph can be rotatable; (b)~hence, two 4-cycles
    $C$ and $C'$ with properly intersecting interiors cannot be
    rotatable.}
  \label{fig:cycleRelationsApp}
\end{figure}
\begin{figure}[tb]
  \centering
  \includegraphics[page=3]{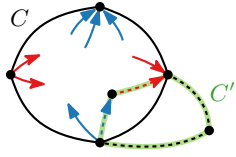}
  \caption{Two rotatable 4-cycles $C$ and $C'$ may overlap if, say, $C'$ is empty and its
    interior contains exactly one edge of~$C$.}
  \label{fig:cycleRelationsOverlapApp}
\end{figure}

Note that two rotatable 4-cycles $C$ and $C'$ may \emph{overlap} in
the sense that one of them, say, $C'$ is empty and its interior
contains exactly one edge of $C$; see \cref{fig:cycleRelationsOverlapApp}.
Indeed, if $C$ rotates multiple times on a path from
$\cL_{\mathrm{min}}$ to $\cL_{\mathrm{max}}$, then each edge of $C$ is
interior to an empty rotatable 4-cycle.

\Cref{clm:cycleRelations} implies that a rotation of a 4-cycle $C'$
for a REL $\cL$ of $G$ can be seen as a rotation on $\cL$ restricted
to $\GCi$ if $C'$ lies inside of $C$, or restricted to $\GCo$ if their
interiors are disjoint or if~$C$ lies inside of~$C'$.  In the last
case, to combine the RELs of $\GCo$ and $\GCi$ into a REL of~$G$, we
have to rotate the REL of $\GCi$ once.  This yields the following.

\begin{lemma}
  \label{clm:rotationSequencePartition}
  Let $G$ be a PTP-graph with a nontrivial rotatable separating
  4-cycle~$C$.  A rotation sequence $\sigma$ on a REL $\cL$ of $G$ can
  be partitioned into a rotation sequence that acts on $\cL$
  restricted to $\GCo$ and a rotation sequence that acts on $\cL$
  restricted to $\GCi$.
\end{lemma}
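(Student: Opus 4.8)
The plan is to process the rotation sequence $\sigma = (r_1, \dots, r_m)$ one step at a time and, using \cref{clm:cycleRelations}, to sort each rotation into one of two output sequences: $\sigma_{\mathrm{out}}$ acting on $\cL|_{\GCo}$ and $\sigma_{\mathrm{in}}$ acting on $\cL|_{\GCi}$. First I would fix the two restrictions. Because each vertex of $C$ sends edges to the interior of $C$ in only one color, the bundle of edges from a vertex of $C$ into the interior collapses to a single well-colored edge $v_C\,x$ in $\GCo$, so $\cL$ restricts to a REL $\cL|_{\GCo}$ of $\GCo$ in which the four edges of $C$ are inner edges keeping their $\cL$-colors. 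For $\GCi$ I keep the colors of the four edges of $C$ as well, so that $C$ is an alternating $4$-cycle of $\GCi$ whose rotation recolors \emph{all} inner edges of $\GCi$; thus a single frame rotation of $\cL|_{\GCi}$ realizes the $90^\circ$ turn of the interior block $I_C$. I maintain as an invariant that the current REL of $G$ is determined by the current pair $(\cL|_{\GCo}, \cL|_{\GCi})$ together with this gluing.

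Next I would classify $r_j$, which rotates a $4$-cycle $C_j$. By \cref{clm:cycleRelations} the interior of $C_j$ is disjoint from, strictly inside, or contains that of $C$ (the case $C_j = C$ being the boundary of the last), and additionally $C_j$ may overlap $C$ as an empty $4$-cycle whose single interior edge is an edge of $C$. If $C_j$ lies strictly inside $C$, its edges are inner edges of $\GCi$, so $r_j$ is appended to $\sigma_{\mathrm{in}}$ and leaves $\cL|_{\GCo}$ untouched. If the interiors are disjoint, or $C_j$ is the overlapping empty cycle, then every edge that $r_j$ recolors is an edge of $\GCo$ (an exterior edge, an edge of $C$, or a $v_C$-bundle), so $r_j$ is appended to $\sigma_{\mathrm{out}}$. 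Finally, if $C_j \supseteq C$, then in $G$ the rotation both turns $I_C$ by $90^\circ$ and recolors the part of $\GCo$ strictly inside $C_j$; I would record it as the $\GCo$-rotation of $C_j$ in $\sigma_{\mathrm{out}}$ and, as the ``combine'' step, insert one frame rotation of $C$ into $\sigma_{\mathrm{in}}$.

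To finish I would show, by induction on the prefix length, that $\sigma_{\mathrm{in}}$ and $\sigma_{\mathrm{out}}$ are \emph{valid} rotation sequences on $\cL|_{\GCi}$ and $\cL|_{\GCo}$, respectively. The point is that a genuinely inner rotation is alternating depending only on the inner edges of $\GCi$, while a disjoint or enclosing rotation depends only on edges of $\GCo$; since these edge sets are essentially disjoint, interleaving the two kinds does not change any alternation status, so the two sub-sequences are individually executable. The only genuine interaction comes from the enclosing case, whose $90^\circ$ turn transposes $\cL|_{\GCi}$. Here I would use that transposition preserves alternation and maps the alternating $4$-cycles of $\GCi$ bijectively to one another, so that applying the frame rotation of $C$ keeps $\sigma_{\mathrm{in}}$ a legal path in $\cH(\GCi)$ and correctly re-glues the interior.

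I expect the main obstacle to be exactly this enclosing/overlap interaction with the frame. Concretely, I must verify that the $90^\circ$ turn induced on $I_C$ is faithfully a \emph{single} step in $\cH(\GCi)$~-- the frame rotation of $C$~-- and that, in the distributive lattice, it commutes with the genuine inner rotations so that the reshuffling permitted by \cref{clm:cycleRelations} never invalidates a later rotation. The overlap case, where only one edge of $C$ is recolored, needs the extra check that this edge is treated consistently as an inner edge of $\GCo$ and as a frame edge of $\GCi$, since an inner $4$-cycle using that edge could otherwise be left non-alternating in $\sigma_{\mathrm{in}}$.
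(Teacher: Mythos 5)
Your proposal is correct and takes essentially the same route as the paper: the paper derives the lemma directly from \cref{clm:cycleRelations}, classifying each rotation as acting on $\cL$ restricted to \GCi (cycle inside $C$) or on $\cL$ restricted to \GCo (interiors disjoint, overlapping, or $C$ inside the rotated cycle), and notes that in the enclosing case one must additionally rotate the REL of \GCi once to recombine~--- exactly your ``frame rotation.'' Your explicit gluing invariant, inductive validity check, and treatment of the overlap and $C_j = C$ cases merely spell out details the paper leaves implicit.
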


\RELdistance*
\label{clm:RELdistance*}

\begin{proof}
For a PTP graph that contains only empty and trivial separating 4-cycles
or, equivalently, for a minimal separation component, this is known~\cite{EMSV12}.
We partition~$G$ into two separation components and 
apply an inductive argument based on~\cref{clm:rotationSequencePartition}.
Since the bound on the diameter is on the number of inner vertices
(that is, we do not count the outer 4-cycle), the sum of the
lengths of the sequences in the two separation components is still in~$\Oh(n^2)$.
\end{proof}

\paragraph{Construction of a shortest path.}
\label{app:shortestPath}

Let~$G$ be an~$n$-vertex PTP graph whose separating 4-cycles are trivial.
Let~$\R$ and~$\R'$ be two rectangular duals of~$G$ that realize the RELs~$\cL$ and~$\cL'$, respectively.
For a rotatable 4-cycle of~$G$, we defined the rotation count~$f_C(\cL)$
as the number of rotations of~$C$ in a monotone sequence from the minimum REL~$\cL_{\mathrm{min}}$ to~$\cL$. 
Let~$\cP(G)$ be a poset where each vertex is a tuple~$(C, i)$ 
consisting of a rotatable 4-cycle~$C$ and a rotation count~$i$.
Furthermore,~$(C, i) \leq (C', j)$ when for each REL~$\cL$ with~$f_C(\cL) \leq i$
it holds that~$f_{C'}(\cL) \leq j$;
that is, it is not possible to increase the rotation count of~$C'$ from~$j$ to~$j+1$
prior to increasing the rotation count of~$C$ from~$i$ to~$i+1$.
Eppstein \etal~\cite{EMSV12} showed that each REL~$\cL$ of~$G$ can be represented
by the partition of~$\cP(G)$ into a downward-closed set~$L(\cL)$ 
and an upward-closed set~$U(\cL)$ such that
$(C, i) \in L(\cL)$ when~$i < f_C(\cL)$ and~$(C, i) \in U(\cL)$ otherwise.
In fact, they showed that~$\cP(G)$ is order-isomorphic 
to the poset defined by Birkhoff's representation theorem~\cite{Bir37}.
Hence, the meet of two RELs~$\cL$ and~$\cL'$ in~$\cH(G)$ (the lattice of RELs of~$G$) 
is represented by~$L(\cL) \cap L(\cL')$.
By \cref{clm:cycleRelations,clm:RELdistance}, 
we can extend this result again to any PTP graph~$G$ 
by considering the minimum separation components of~$G$ separately.
The rotation count of each rotatable four-cycle
is then define only with respect to the minimum separation component of~$G$ it lies in.

Let $C$ be a rotatable 4-cycle of $G$.
We define the \emph{rotation count} $f_C(\cL)$ as the number of rotations of $C$ 
in a monotone sequence from the minimum REL $\cL_{\mathrm{min}}$ to $\cL$. 
Let $\bar \cL = \cL \wedge \cL'$, the meet of $\cL$ and $\cL'$.
By a result of Eppstein \etal~\cite{EMSV12},
for every rotatable 4-cycle $C$ of $G$,
it holds that $f_C(\bar\cL) = \min\set{f_C(\cL), f_C(\cL')}$.
Details are given in \cref{app:shortestPath}.

\RELshortestPath*
\label{clm:RELshortestPath*}
\begin{proof}
Let $\bar \cL = \cL \wedge \cL'$ be the meet of $\cL$ and $\cL'$.
As noted in \cref{app:lattice}, there exists a shortest path from $\cL$ to $\cL'$ via $\bar \cL$.
Therefore, we first compute $\bar \cL$ and then construct monotone paths from $\cL$ and $\cL'$ to $\bar \cL$, respectively.
To this end, we first find all cw rotatable 4-cycles in $\cL$
and then execute rotations until we reach $\cL_{\mathrm{min}}$.
With each step, we count the number of times the involved cycles have been rotated so far
and check whether this enables cw rotations of other 4-cycles.
When we reach $\cL_{\mathrm{min}}$, we have the rotation counts for all (involved) rotatable 4-cycles for $\cL$ (and $\cL'$).
For each rotatable 4-cycle $C$ of $G$,
we compute $f_C(\bar\cL)$ as $\min\set{f_C(\cL), f_C(\cL')}$.
One such rotation step takes $\Oh(n)$ time and by \cref{clm:RELdistance} the number of steps is at most $\Oh(n^2)$.

Finally, to compute two shortest paths from $\cL$ and~$\cL'$ to $\bar \cL$, 
we greedily rotate cw rotatable 4-cycles whose rotation count is great than~$f_C(\bar \cL)$.  
This takes~$\Oh(n^2)$~time.
\end{proof}

In \cref{sec:serial}, we construct a relaxed morph between two rectangular duals $\R$ and $\R'$ along a shortest paths between them.
This morph requires at most a quadratic number of steps,
since a shortest path between $\R$ and $\R'$  has by \cref{clm:RELshortestPath} at most quadratic length.
Note that for this asymptotic bound on the number of steps,
it would suffice to morph along, for example, a path from $\R$ 
via a rectangular dual that realizes $\cL_{\mathrm{min}}$ to $\R'$. 
By \cref{clm:cycleRelations}, this path also has a length in $\Oh(n^2)$.
However, with \cref{clm:approx}, we show that by executing rotations in parallel
and morphing along a shortest path, we can construct a morph that uses only $\Oh(1)$ times 
the minimum number of steps needed to get from~\R to~$\R'$.

\section{Using Kant and He on Auxiliary Graph \texorpdfstring{$\hat G$}{Ĝ}} \label{app:kanthe}
In this section we explain the algorithm by Kant and He~\cite{KH97} 
and explain why we can use it on the auxiliary graph $\hat G$ with auxiliary REL $\hat \cL$ 
in the proof of \cref{clm:tidyRecDual}.

Kant and He~\cite{KH97} introduced RELs
and described two linear-time algorithms that compute a REL for a given PTP graph;
one algorithm is based on edge contractions, the other is based on canonical orderings.
They then use the algorithm by He~\cite{He93} to construct in linear time a 
rectangular dual that realizes this REL
and where the coordinates are all integers.
 
The algorithm by He works as follows; see \cref{fig:weakDual}.
Given a PTP graph $G$ with REL $(L_1,L_2)$.
Consider the weak dual $L_1^*(G)$ of $L_1(G)$. 
In $L_1^*(G)$, there is a vertex $f$ for every interior face $f$ in the plane embedding of $L_1(G)$, 
and there are two vertices $\overline{f}_{\mathrm{L}}$ and 
$\overline{f}_{\mathrm{R}}$ for the outer face $\overline{f}$ in the plane embedding of $L_1(G)$.
There is an edge $(f,f')$ if $f$ is the interior face to the left and $f'$ is 
the interior face to the right of some edge in $L_1(G)$;
an edge $(\overline{f}_{\mathrm{L}},f')$ if $\overline{f}$ is the face to the left and $f'$ is 
the interior face to the right of some edge in $L_1(G)$;
and an edge $(f,\overline{f}_{\mathrm{R}})$ if $f$ is the interior face to the left and $\overline{f}$ is 
the face to the right of some edge in $L_1(G)$.
Then $L_1^*(G)$ is a planar st-graph with source $\overline{f}_{\mathrm{L}}$
and sink $\overline{f}_{\mathrm{R}}$.
Compute a topological numbering $d$ of $L_1^*(G)$.
For any vertex $v$ of $G$, let $\mathrm{left}(v)$ be the face to the left of $v$
in $L_1(G)$, and let $\mathrm{right}(v)$ be the face to the right of $v$
in $L_1(G)$; these are the two faces between an incoming and an outgoing edge
of $v$. Then the algorithm sets the x-coordinate of the left side of $\R(v)$
to $d(\mathrm{left}(v))$ and the x-coordinate of the right side of $\R(v)$
to $d(\mathrm{right}(v))$.
The y-coordinates for the bottom and top side of $\R(v)$ are calculated
analogously from the weak dual $L_2^*(G)$ of $L_2(G)$.

\begin{figure}[ht]
	\centering
	    \begin{subfigure}[t]{.34 \linewidth}
		\centering
		\includegraphics[page=1]{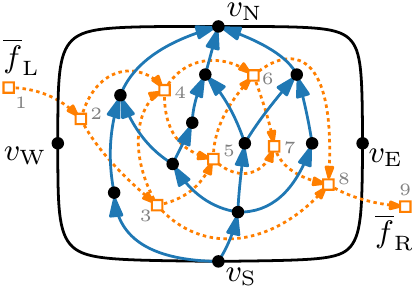}
		\caption{$L_1(G)$ and $L_1^*(G)$}
		\label{fig:weakDual:one}
	\end{subfigure}
	\hfill
	\begin{subfigure}[t]{.33 \linewidth}
		\centering
		\includegraphics[page=2]{weakDual}
		\caption{$L_2(G)$ and $L_2^*(G)$}
		\label{fig:weakDual:two}
	\end{subfigure}
	\hfill
	\begin{subfigure}[t]{.3 \linewidth}
		\centering
		\includegraphics[page=3]{weakDual}
		\caption{$\R'$}
		\label{fig:weakDual:R}
	\end{subfigure}
	\caption{Illustration of the algorithm by He~\cite{He93}.
          (a+b) Topological numberings (gray) for the weak duals
          $L_1^*(G)$ and $L_2^*(G)$ (dotted) of $L_1(G)$ and $L_2(G)$;
          (c)~resulting rectangular dual~\R.}
	\label{fig:weakDual}
\end{figure}

Note that the maximal vertical segments in the resulting rectangular
dual~\R are in bijection with the vertices of~$L_1^*(G)$ and the
maximal horizontal segments are in bijection with the vertices
of~$L_2^*(G)$; see \cref{fig:weakDual:R}. 

He~\cite{He93} did not show this explicitly, but for a given REL, his
linear-time algorithm yields a rectangular dual of minimum width and
height, and thus of minimum area and perimeter. 

In the proof of \cref{clm:tidyRecDual} we extended a PTP graph $G$ 
to an auxiliary graph $\hat G$ where some faces are chordless 4-cycles. 
We also described how the REL $\cL$ of $G$ 
can be extended to an auxiliary REL $\hat L$ of $\hat G$
that adheres to the REL coloring rules.
Also, each chordless 4-cycle $C$ bounding a face of $\hat G$ is colored alternatingly.
In a rectangular dual $\hat \R_1$ of $\hat G$, 
the four rectangles corresponding to the vertices of $C$ meet in a single point;
there are thus not only T-junctions, but also crossings.
The vertices of $\hat L_1^*(G)$ and $\hat L_2^*(G)$ 
thus still correspond to maximal vertical and horizontal line segments in $\hat \R_1$, respectively.
See also again \cref{fig:prepMorph,fig:prepMorph:empty}. 
Hence, applying the algorithm by Kant and He~\cite{KH97} to~$\hat{G}$
and~$\hat{\cL}$, we obtain the (almost) rectangular dual~$\hat{\R_1}$
described above.

\section{Proofs Omitted in Section \ref{sec:parallel}}

\label{clm:simultaneousRotation*}
\simultaneousRotation*

\begin{proof}
We assume that $\cC$ contains only cw rotatable cycles;
otherwise we apply the same algorithm to the ccw rotatable cycles afterwards. 
Our algorithm to construct a morph from $\R$ to some $\R'$ consists of the following steps.
(i) We construct the conflict graph~$K(\cC)$.
This can be done in linear time by traversing 
the four border segments of each cycle in $\cC$. 
(ii) We greedily compute a 5-coloring of $K(\cC)$, again in linear time,
and pick a set $\cC_i$, $i \in \set{1, \ldots, 5}$, corresponding to one color.
We add all empty 4-cycles of $\cC$ to $\cC_1$.
(iii) We compute and execute a preparatory linear morph such that,
(iv) all rotations in $\cC_i$ can safely be executed in parallel.
Hence, we need a constant number of linear morphs 
to arrive at a rectangular dual $\R'$ realizing $\cL'$.

The preparatory linear morph for $\cC_i$ works similar to the preparatory morph in the serial case.
Using an auxiliary PTP graph $\hat G$ and an auxiliary REL $\hat \cL$
we find a target rectangular dual $\R_1$ that also realizes $\cL$.
We can then use a linear morph from $\R$ to $\R_1$ according to \cref{sec:sameREL}.
For each 4-cycle in $\cC_i$ that does not share a segment with another 4-cycle in $\cC_i$,
we extend $G$ and $\cL$ towards $\hat G$ and $\hat \cL$, respectively, as before.
Since these cycles are non-conflicting, this can be done nearly independently for each 4-cycle;
however a rectangle might now be split both horizontally and vertically (meaning, its vertex gets duplicated twice). 

For a separating 4-cycle $C$ and empty 4-cycles $C_1, \ldots, C_k$ 
that share a maximal line segment $s$ as border segment,
we assume w.l.o.g.\ that $s$ is the upper border segment of $C$. 
Further, let $C_1, \ldots, C_k$ appear along $s$ from left to right
and let $C = \croc{a, b, c d}$ as in the proof of \cref{clm:tidyRecDual}.
This case is illustrated in \cref{fig:simultaneousRotationSegment:source}. 
The rotation of $C$ moves $s$ downwards
Note that the rotation of $C$ breaks $s$ and moves the part right of the interior $I_C$ of $C$ down.
Thus, each 4-cycle $C_i$, $i \in \set{1, \ldots, k}$
has to follow this downward motion.
Furthermore, $C_i$ move the part of $s$ right of it also one down
and hence $C_i$ moves down the height of $I_C$ plus $i$; see \cref{fig:simultaneousRotationSegment:target}.
Therefore, to create enough vertical space below $s$ that contains not other horizontal segment,
we ensure that in $\R_1$ each rectangle $\R_1(v)$ with $v$ in $P_d \setminus \set{x}$
satisfies $\y_1(\R_1(v)) < \y_2(\R_1(a)) - k$; see \cref{fig:simultaneousRotationSegment:auxiliary}.
Recall that the preparatory step for $C$ in the serial case
duplicates all vertices on $P_d$.
We do the same here with a minor change,
namely, we split $d$ into the vertices $d_1, \ldots, d_{k+2}$
while all other vertices in $P_d \setminus \set{d, x}$ are only duplicated.
Let $y$ be the successor of $d$ on $P_d$.
We assign the edges clockwise between (and including) $dy$ and $ad$ to $d_1$, 
and the edges clockwise between $ad$ and $dy$ to $d_{k+2}$.
For $i \in \set{1, \ldots, k + 1}$, we connect $d_i$ and $d_{i+1}$ with a blue edge and
add a red edge $a d_i$.
Furthermore, we add the red edge $d_1 y_1$ and, for $\in \set{2, \ldots, k + 2}$,
the red edge $d_i y_2$.
For the vertices $v \in P_d \setminus \set{d}$, we add edges as before.
As a result we get that $\hat\R_1(d_1), \ldots, \hat\R_1(d_k)$ stack on top of each other 
and, for all $v$ in $P_d \setminus \set{x}$, we have $\y_1(\hat\R_1(v)) < \y_2(\hat\R_1(a)) - k$.
In other words, there is enough vertical space below $s$ in $\hat \R_1$ and thus also in $\R_1$.

The case where $s$ is the interior segment shared only by empty 4-cycles
is handled analogously to the previous case and \cref{clm:tidyRecDual}.

\begin{figure}[htb]
  \centering
    \begin{subfigure}[t]{.31 \linewidth}
		\centering
		\includegraphics[page=1]{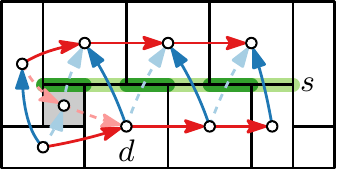}
		\caption{In $\R$ the separating 4-cycle and the two empty 4-cycles share $s$ as border segment.}
		\label{fig:simultaneousRotationSegment:source}
	\end{subfigure}
	\hfill
	\begin{subfigure}[t]{.31 \linewidth}
		\centering
		\includegraphics[page=2]{simultaneousRotations}
		\caption{We create enough vertical space for (parts of) $s$ to move downwards 
			by splitting $d$ sufficiently many times in $\hat\R_1$.}
		\label{fig:simultaneousRotationSegment:auxiliary}
	\end{subfigure}
	\hfill
	\begin{subfigure}[t]{.31 \linewidth}
		\centering
		\includegraphics[page=3]{simultaneousRotations}
		\caption{In $\R'$ each successive part of $s$ moves further down.}
		\label{fig:simultaneousRotationSegment:target}
	\end{subfigure}
  \caption{To enable a parallel rotation of all three alternating 4-cycles 
  in the rectangular dual $\R$ that share the border segment $s$,
  we first morph to a rectangular dual $\R_1$ derived from the auxiliary rectangular dual $\hat\R_1$.
  In doing so, we ensure that there is enough space below $s$ such that its parts can move down 
  to their positions in $\R'$.}
  \label{fig:simultaneousRotationSegment} 
\end{figure}

The auxiliary graph $\hat G$ and the auxiliary REL $\hat \cL$ 
can be constructed locally around each involved 4-cycle.
Furthermore, since each vertex is either only duplicated a constant number of times
or one vertex linear many times in the number of involved empty 4-cycles (such as $d$ above),
the sizes of $\hat G$ and $\hat \cL$ are linear in the size of $G$. 
Hence, the algorithm by Kant and He~\cite{KH97} computes $\hat \R_1$ in $\Oh(n)$ time
and we can derive $\R_1$ in $\Oh(n)$ time.
Finally, to find $\R'$, we compute the positions of all rectangles after the rotations. 
Because of the preparatory step, this can be again done locally on $\R_1$ in overall $\Oh(n)$ time.
\end{proof}

\label{clm:approx*}
\approx*
\begin{proof}
Let~$\R$ and~$\R'$ be realized by RELs~$\cL$ and~$\cL'$, respectively.
Let~$\bar \cL = \cL \wedge \cL'$, the meet of~$\cL$ and~$\cL'$.
We first consider the path from~$\cL$ to~$\bar \cL$. 
Let~$X$ be the set of cw rotations between~$\cL$ and~$\bar \cL$. 
Not all cw rotations in~$X$ can be executed in parallel 
because other cw rotations in~$X$ might have to be executed first.
Let~$k$ be the minimum number of linear morphs of a relaxed morph~$M$
from~$\cL$ to~$\bar \cL$. This morph has to execute all rotations in $X$, so
it partitions~$X$ into sets~$X_1, \ldots, X_k$ 
where~$X_i$ contains all cw rotations in $X$ executed by~$M$ in step~$i$.
Our algorithm repeatedly applies \cref{clm:simultaneousRotation} to
execute all possible cw rotations in $X$ within $\Oh(1)$ steps.
Hence, in the first application of \cref{clm:simultaneousRotation}, our algorithm
executes all rotations in $X_1$ (and possibly more).
Subsequently, in the second application of \cref{clm:simultaneousRotation}, our
algorithm executes all rotations in $X_2$ (except those it had already
executed in the first application).
Let $x$ be a rotation that is executed in the $j$-th application of
\cref{clm:simultaneousRotation}.
Then we have $x\in X_i$ with $i\ge j$.
Thus, our algorithm requires at most $k$ applications of \cref{clm:simultaneousRotation}
to execute all rotations in $X=X_1 \cup\dots\cup X_k$,
each of which requires $\Oh(1)$ steps.  In total, our algorithm
uses~$\Oh(k)$ steps to get from~$\cL$ to~$\bar \cL$.

Symmetrically, let~$X'$ be the set of ccw rotations between~$\bar\cL$ and~$\cL'$.
If~$\ell$ is the minimum number of linear morphs 
of a relaxed morph from~$\bar \cL$ to~$\cL'$,
then our algorithm requires~$\Oh(\ell)$ steps between~$\bar \cL$ and~$\cL'$,
so in total~$\Oh(k + \ell)$ steps from~$\cL$ to~$\cL'$. 

Let~$M^\star$ be a relaxed morph between~$\R$ and~$\R'$ 
that uses the minimum number of linear morphs.
Because of the lattice structure,~$M^\star$ has to execute all cw rotations in~$X$ 
and all ccw rotations in~$X'$.
Hence, $M^\star$ uses at least~$\max\set{k, \ell} \geq (k +
\ell)/2$ linear morphs.

The running time follows from \cref{clm:serialMorphing,clm:simultaneousRotation}.
\end{proof}

\end{document}